\documentclass[a4paper,12pt]{article}
\usepackage{amsthm}
\usepackage{amssymb}
\usepackage{amsmath}
\usepackage{amsfonts}
\usepackage{mathrsfs}
\usepackage{bbm}
\usepackage{graphicx}
\usepackage[all,dvips,arc,frame]{xy}
\usepackage{float}

\newcommand{\Ad}{\mathrm{Ad}}
\newcommand{\Ker}{\mathrm{Ker}}
\newcommand{\Dom}{\mathrm{Dom}}

\newcommand{\Img}{\mathrm{Im}}

\newcommand{\beq}{\begin{equation}}
\newcommand{\eeq}{\end{equation}}
\newcommand{\beqn}{\begin{eqnarray}}
\newcommand{\eeqn}{\end{eqnarray}}
\newtheorem{lemma}{Lemma}
\newtheorem{theo}{Theorem}
\theoremstyle{remark}

\title{}

\def\G{\Gamma}

\def\U{\Upsilon}

\def\d{\partial}

\def\be{\begin{equation}}
\def\ee{\end{equation}}
\def\bea{\begin{eqnarray}}
\def\eea{\end{eqnarray}}
\def\A{{\cal A}}

\def\D{{\cal D}}

\def\G{{\cal G}}

\def\R{{\cal R}}
\def\S{{\cal S}}

\def\U{{\cal U}}

\def\F{{\cal F}}

\def\si{\sigma}

\def\L{\Lambda}

\begin{document}

\vspace*{0.5cm}
\begin{center}
{\Large \bf Dressing cosets revisited}

\end{center}

\vspace{0.3cm}

\begin{center}

 R. Squellari${}^{a,b}$ \\

\bigskip

{\it ${}^a$Institut de math\'ematiques de Luminy,
 \\ 163, Avenue de Luminy, \\ 13288 Marseille, France}
 
 \medskip
  
 ${}^b$ \it Laboratoire de Physique Th\'eorique et des
Hautes Energies\\
{\it CNRS, Unit\'e associ\'ee URA 280}\\
{\it 2 Place Jussieu, F-75251 Paris Cedex 05, France}

\bigskip

\end{center}

\vspace{0.2cm}

\vskip 2.5truecm

\begin{abstract}
We present an alternative algebraic derivation   of the  dual pair of nonlinear $\sigma$-models  based on the 'dressing cosets' extension
of the Poisson-Lie $T$-duality \cite{KS1}. Then we generalize the result to dual pairs of Lagrangians not considered in \cite{KS1}. Our generalization turns out to incorporate the dualisable models constructed  by Sfetsos  in \cite{Sfet1}.
\end{abstract}

\newpage

\section{Introduction}
  Poisson-Lie $T$-duality  \cite{KS2}   establishes a dynamical equivalence    of  certain nonlinear  $\sigma$-models,  the target manifolds of which are a Poisson-Lie group $G$ and its dual Poisson-Lie group $\tilde G$, respectively.  Those models admit a generalization \cite{KS1}
 for which the targets of the mutually dual $\sigma$-models are respectively the spaces of the dressing orbits $F\backslash G$ and $F\backslash \tilde G$ where $F$ is certain (isotropic) subgroup of the common Drinfeld double $D$ of $G$ and $\tilde G$.  The most 
 economic way to describe the Lagrangian dynamics of the models on $F\backslash G$ and $F\backslash \tilde G$
 is in terms of $\sigma$-models living  apparently on the targets $G$ and $\tilde G$ but admitting a local dressing
  gauge symmetry with the gauge group being the current group with values in $F$ \cite{KS1}.    We remark that this description is the dressing
  analogue of the 
  usual description of the Lagrangian dynamics of the standard  gauged $G/H$ WZW $\sigma$-models in terms of the $G$ target only (cf.  Eq. (2.8) of \cite{GK}).  In that case the gauge group
  $H$ acts  on the apparent target $G$ in the adjoint way.
  
  \medskip
  
  \noindent The brief   and somewhat sketchy derivation of the dynamics of the dressing cosets presented in \cite{KS1} was based on the methods of symplectic geometry namely on the  symplectic reduction of certain coadjoint orbit of the centrally extended loop group of 
  the Drinfeld double $D$.     In this article,
  we furnish a more algebraic derivation of the second-order action of the dressing cosets found in \cite{KS1}. Futhermore, we show how our new algebraic derivation leads to a generalization of the dressing cosets construction of Ref.\cite{KS1}. We identify explicitly the actions of the generalized dressing cosets and confirm by direct calculation their gauge invariance. Finally we show that the models constructed by Sfetsos in \cite{Sfet1} fit in our generalized construction.
   \medskip
   
   \noindent In Section 2, we introduce our new algebraic calculus through the example of the standard Poisson-Lie T-duality and, in Section 3, we use it to derive the actions of dressing coset models of Ref.\cite{KS1}. In Section 4, by still using the same algebraic method, we obtain new dual pairs of models which were not considered in \cite{KS1} and we call them the generalized dressing cosets. In Section 5, we check explicitly the gauge symmetry of all dressing cosets models (standard or generalized), and in Section 6 we show that the $\sigma$-models found by Sfetsos \cite{Sfet1} can be interpreted as the generalized dressing cosets. We end up with a short Section 7 containing the conclusions.

\section{Preliminaries}

Let us introduce few preliminary concepts which will be common for all particular cases studied in this article. Recall that a Drinfeld double  is  a $2n$-dimensional Lie group $D$ such that its  Lie algebra $\mathcal{D}$  contains 
 two    $n$-dimensional subalgebras $\mathcal{G}$ and $\hat{\mathcal{G}}$  which are complementary, i.e. $\mathcal{G}\cap\hat{\mathcal{G}}=\{0\}$, and which  are   isotropic with respect to a non-degenerate invariant bilinear form on $\mathcal{D}$ denoted $\langle.|.\rangle$, i.e. $\langle \mathcal{G}|\mathcal{G}\rangle=\langle\hat{\mathcal{G}}|\hat{\mathcal{G}}\rangle=0$. In this paper
 we shall consider only the so-called perfect Drinfeld doubles  for which {\it every}  element $l\in D$ can be written in a {\it unique} way
 as $l=g\hat h$ for some $g\in G$ and some  $\hat h\in\hat G$ where $G$ and $\hat G$ are the subgroups of $D$ which integrate
 the Lie subalgebras $\mathcal{G}$ and $\hat{\mathcal{G}}$ of $\D$.\\
The crucial structural ingredient of all models presented in this paper will be certain linear operator $R:\hat{\G}\to\G$ and its adjoint $R^*:\hat{\G}\to\G$ with respect to the bilinear form $\langle.|.\rangle$. In fact, we shall see that the properties of the kernels of $R$ and $R^*$ will classify the different classes of Poisson-Lie $\sigma$-models. Indeed, the case $\Ker R=\Ker R^*=\{0\}$ and $\Ker(R+R^*)=\{0\}$ underlies the standard Poisson-Lie T-duality considered in \cite{KS2}; the case $\Ker R=\Ker R^*=\{0\}$ and $\Ker (R+R^*)\neq\{0\}$ will lead to the dressing cosets introduced in \cite{KS1}; the general case is obtained when $\Ker R=\Ker R^{*}\neq\{0\}$ while $Ker(R+R^{*})\neq\{0\}$. The subcase of general case for which $\Ker R=\Ker R^*=\Ker(R+R^*)\neq\{0\}$ will give the models introduced by Sfetsos \cite{Sfet1}. \medskip 
\\
As a warm-up to help us to set up the language, we consider the simplest case of the standard Poisson-Lie T-duality: models with $\Ker R=\Ker R^*=\Ker (R+R^*)=\{0\}$.
\paragraph{First order action:} Parametrize a two-dimensional world-sheet by a space-like coordinate $\si$ and a time-like coordinate $\tau$  and
consider a  $D$-valued field $l(\si,\tau)$. For the first order action governing the dynamics
of these fields  we take 
\beq \S(l)=\int \big(\frac{1}{2}\langle\partial_\tau ll^{-1}|\partial_\sigma ll^{-1}\rangle+\frac{1}{12}d^{-1}\langle dll^{-1}\wedge[dll^{-1}\wedge dll^{-1}]\rangle+\frac{1}{2} K(\partial_\sigma ll^{-1})\big),
\label{action1}
\eeq
where we omit the measure $ d\tau d\sigma$ of integration for the rest of the article. In the action \eqref{action1},  $dll^{-1}=\partial_\tau ll^{-1}d\tau+\partial_\sigma ll^{-1}d\sigma$ denotes the pull-back of the right-invariant Maurer-Cartan form from $D$ to the world-sheet, and $K$ is a certain quadratic form defined on $\D$. Before expliciting the expression of the quadratic form $K$, we need the following theorem:
\begin{theo}
Let $R:\hat{\G}\to\G$ and $R^*:\hat{\G}\to\G$ such that $\Ker R=\Ker R^*=\{0\}$ and $\Ker(R+R^*)=\{0\}$, then
 there exists an unique pair of linear operators $\alpha_\pm:\D\to\hat{\G}$ such that any $X\in \D$ can be decomposed as:
\beq
X=\alpha_+(X)+\alpha_-(X)+R\alpha_+(X)-R^*\alpha_-(X).
\label{decomposition1}
\eeq
\end{theo}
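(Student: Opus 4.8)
The plan is to reduce the statement to a short linear-algebra computation inside the vector-space splitting $\D=\G\oplus\hat\G$, which is available to us because $D$ is a perfect Drinfeld double. Denote by $P:\D\to\G$ and $\hat P:\D\to\hat\G$ the two projections attached to this splitting, so that $X=PX+\hat PX$ for every $X\in\D$. On the right-hand side of \eqref{decomposition1} the term $\alpha_+(X)+\alpha_-(X)$ lies in $\hat\G$ and the term $R\alpha_+(X)-R^*\alpha_-(X)$ lies in $\G$; hence \eqref{decomposition1} is equivalent to the system of two equations $\alpha_+(X)+\alpha_-(X)=\hat PX$ and $R\alpha_+(X)-R^*\alpha_-(X)=PX$.

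Next I would eliminate $\alpha_-(X)=\hat PX-\alpha_+(X)$ using the first equation and substitute into the second, which yields $(R+R^*)\alpha_+(X)=PX+R^*\hat PX$. This is where the hypothesis $\Ker(R+R^*)=\{0\}$ is used: $R+R^*$ is an injective linear map between the $n$-dimensional spaces $\hat\G$ and $\G$, hence invertible. Consequently $\alpha_+(X)=(R+R^*)^{-1}(PX+R^*\hat PX)$ is forced, and then $\alpha_-(X)=\hat PX-\alpha_+(X)=(R+R^*)^{-1}(R\hat PX-PX)$. These closed formulas prove uniqueness at once (the two equations dictate $\alpha_\pm$), and a direct substitution back into the system — or equivalently into \eqref{decomposition1} — proves existence; linearity of $\alpha_\pm$ is manifest from the formulas. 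Equivalently, one may phrase the whole argument as the observation that the linear map $(\hat a,\hat b)\mapsto \hat a+\hat b+R\hat a-R^*\hat b$ from $\hat\G\oplus\hat\G$ to $\D$ is a map between $2n$-dimensional spaces whose kernel is trivial precisely because $(R+R^*)\hat a=0$ forces $\hat a=0$; its inverse furnishes $(\alpha_+,\alpha_-)$.

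I do not anticipate a genuine obstacle here. The only point deserving a moment's care is the invertibility of $R+R^*$: injectivity is the hypothesis, and surjectivity then follows from $\dim\hat\G=\dim\G=n$. I would also remark that the further assumptions $\Ker R=\Ker R^*=\{0\}$ are not needed for this statement — they will only become relevant later, when $R$ is used to build the quadratic form $K$ in \eqref{action1} and to extract the mutually dual second-order $\sigma$-models.
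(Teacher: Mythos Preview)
Your proof is correct and follows essentially the same approach as the paper: both split $X$ into its $\G$- and $\hat\G$-components, reduce \eqref{decomposition1} to the same $2\times 2$ linear system, and solve by inverting $R+R^*$ to obtain the identical closed formulas for $\alpha_\pm$. Your version is slightly more explicit in justifying the invertibility of $R+R^*$ via the dimension count, and your closing remark that the hypotheses $\Ker R=\Ker R^*=\{0\}$ play no role in this particular theorem is a valid observation.
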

\begin{proof}
Let us take $X\in \D$, then we know that there exists an unique decomposition $X=x+a$ where $x\in\G$ and $a\in\hat{\G}$. In our analyse we suppose the relation \eqref{decomposition1} verified, then by projecting on the algebras $\hat{\G}$ and $\G$ we obtain the system:
\beq
\left\{ \begin{array}{l}a=\alpha_{+}(X)+\alpha_{-}(X) \\ x=R\alpha_{+}(X)-R^{*}\alpha_{-}(X) \end{array}\right. , \;\textrm{or}
\left\{ \begin{array}{l}(R+R^*)\alpha_-(X)=Ra-x \\ (R+R^*)\alpha_+(X)=R^*a+x \end{array} \right. .
\eeq 
Since in our case $\Ker(R+R^*)$ is trivial, we find explicitly the expression of the $\alpha_\pm$ operators:
\beq
\left\{ \begin{array}{l} \alpha_-(X)=(R+R^*)^{-1}(Ra-x) \\ \alpha_+(X)=(R+R^*)^{-1}(R^*a+x) \end{array} \right. .
\label{systeme1}
\eeq
Thus we have found the unique pair of operators $\alpha_\pm$, which fit in the decomposition \eqref{decomposition1}.
\end{proof}
\noindent 
We define a quadratic form $K$ on $\D$ by:
\beq
K(X)=2\langle \alpha_+(X)|R\alpha_+(X)\rangle+2\langle \alpha_-(X)|R\alpha_-(X)\rangle .
\label{hamiltonien}
\eeq
This definition completes the description of the first order action \eqref{action1}. However, it is perhaps not evident at the first sight that the quadratic form $K$ coincides with the quadratic form appearing in Ref.\cite{KS2}, \cite{KS3}. Let us see that this is indeed true. We observe that the subspaces $\R=\{\gamma+R\gamma,\gamma\in\hat{\G}\}$ and its orthogonal complement $\R^\bot=\{\gamma-R^*\gamma,\gamma\in\hat{\G}\}$ are $n$-dimensional. Moreover we see that $\Ker(R+R^*)=\{0\}$ implies $\R\cap\R^\bot=\{0\}$, hence $\R\oplus\R^\bot=\D$. Recall that the quadratic form $K'$ appearing in Ref.\cite{KS3} has two eigenvalues $\pm1$ corresponding to the eigenspaces $\R$ and $\R^\bot$ respectively.\\
The result is that we obtain the expression of $K'(X)$ for all $X\in \D$ as:
\beqn
K'(X)&=&\langle \alpha_+(X)+R\alpha_+(X)|\alpha_+(X)+R\alpha_+(X)\rangle \nonumber \\
&& \phantom{rrrrrrrrrrrr}-\langle \alpha_-(X)-R^*\alpha_-(X)|\alpha_+(X)-R^*\alpha_-(X)\rangle \nonumber \\
&=&2\langle \alpha_+(X)|R\alpha_+(X)\rangle+2\langle \alpha_-(X)|R\alpha_-(X)\rangle =K(X).
\eeqn
We can easily determine the equation of motion verified by the $l(\si,\tau)\in D$ field, thus the extremalization of the action \eqref{action1} gives:
\beq
\partial_\tau ll^{-1} +\alpha_+(\partial_\sigma ll^{-1})-\alpha_-(\partial_\sigma ll^{-1})+R\alpha_+(\partial_\sigma ll^{-1})+R^*\alpha_-(\partial_\sigma ll^{-1})=0.
\label{motion1}
\eeq
\paragraph{Poisson-Lie $\si$-models:} Because the double $D$ is perfect, we can represent any field $l(\si,\tau)\in D$ as $l(\si,\tau)=g(\si,\tau)\hat{h}(\si,\tau)$ with $g(\si,\tau)\in G$ and $\hat{h}(\si,\tau)\in\hat{G}$. Using this parametrization and the Polyakov-Wiegmann formula \cite{PW} the action \eqref{action1} becomes:
\beq
\mathcal{S}(g,\hat{h})=\int\langle g^{-1}\partial_\tau g|\partial_\sigma \hat{h} \hat{h}^{-1}\rangle+\frac{1}{2}K(\partial_\sigma gg^{-1}+\Ad_g\partial_\sigma \hat{h} \hat{h}^{-1}) .
 \label{lag1}
\eeq
Applying the decomposition \eqref{decomposition1} on $X=\partial_\sigma l l^{-1}=\partial_\sigma gg^{-1}+\Ad_g\partial_\sigma \hat{h} \hat{h}^{-1}$, the action \eqref{lag1} becomes:
\beq
\mathcal{S}(g,\hat{h})=\int\langle \partial_\tau gg^{-1}| \alpha_++\alpha_-\rangle+\langle\alpha_+|R\alpha_+\rangle+\langle \alpha_-|R\alpha_- \rangle,
\label{lag2}
\eeq
where we denoted $\alpha_\pm(\partial_\sigma l l^{-1})=\alpha_\pm$ and used Eq \eqref{hamiltonien}. Now we define two operators $J(\hat{J})$ on $\mathcal{D}$, which project to $\mathcal{G}(\hat{\mathcal{G}})$ with the kernel $\hat{\mathcal{G}}(\mathcal{G})$ respectively. Let us project the decomposition:
\beq
\partial_\sigma l l^{-1}=\partial_\sigma gg^{-1}+\Ad_g\partial_\sigma \hat{h} \hat{h}^{-1}=\alpha_++\alpha_-+R\alpha_+-R^*\alpha_-,
\eeq
on the subspaces $\hat{\G}$ and $\G$:
\beqn
\alpha_++\alpha_-&=&\hat{J}\Ad_g\hat{J}\partial_\sigma \hat{h} \hat{h}^{-1} \label{eq_pi1}\\
R\alpha_+-R^*\alpha_-&=&\partial_\sigma gg^{-1}+J\Ad_g\hat{J}\partial_\sigma \hat{h} \hat{h}^{-1}.
\label{eq_pi2}
\eeqn
In Eq.\eqref{eq_pi1}, the operator $\hat{J}\Ad_g\hat{J}:\hat{\G}\to\hat{\G}$ is invertible, which gives:
\beq
\partial_\sigma \hat{h} \hat{h}^{-1}=\hat{J}\Ad_{g^{-1}}\hat{J}(\alpha_++\alpha_-).
\label{eq_pi3}
\eeq
Inserting relation \eqref{eq_pi3} in Eq.\eqref{eq_pi2}, we obtain:
\beqn
R\alpha_+-R^*\alpha_-&=&\partial_\sigma gg^{-1}+\Pi^{g*}(\alpha_++\alpha_-)\quad \textrm{or, equivalently:}\\
\partial_\sigma gg^{-1}&=&(R+\Pi^g)\alpha_+-(R+\Pi^g)^*\alpha_-.
\label{relation1}
\eeqn
Here 
\beqn
&&\Pi^g=J\Ad_g J\Ad_{g^{-1}}\hat{J}\; \textrm{ and}\nonumber \\
&&\Pi^{g*}=J\Ad_g \hat{J}\Ad_{g^{-1}}\hat{J}, \nonumber
\eeqn
are two $g$-dependent operators from $\hat{\G}$ into $\G$. By the way, it turns out that $\Pi^g$ defines the so called Poisson-Lie structure\footnote{ More precisely, the Poisson bracket on the group $G$ can be expressed in term of the operator $\Pi^g$ as $\{f,g\}=\langle \nabla f| \Pi^g \nabla g \rangle$. Here $(f,g)$ are two arbitrary fonctions of $G$ and $\nabla\in\hat{\G}$ is defined by $\langle \nabla |x\rangle=\nabla^{x}$ where $\nabla^{x}$ is a differential operator generating the left action of $\mathcal{G}\;(x\in\G)$ on $G$.} on $G$.
\medskip
\\
Consider now any skew symetric $g$ dependent operator $M^g:\hat{\G}\to\G$, i.e. verifying $\langle M^gx|y\rangle=-\langle x|M^gy\rangle,\; \forall (x,y) \in \hat{\G}$. The action \eqref{lag2} can be rewritten in a more complicated  but usefull way as:
\beq
\mathcal{S}(g,\hat{h})=\int\langle \partial_\tau gg^{-1}| \alpha_++\alpha_-\rangle+\langle\alpha_+|(R+M^g)\alpha_+\rangle+\langle \alpha_-|(R+M^g)\alpha_- \rangle.
\label{lag3}
\eeq
If moreover $R+M^g$ is invertible, then the action\eqref{lag3} can be written in even more complicated way as:
\beq
S(g,\hat{h})=-\frac{1}{2}\int \langle Y|(R+M^g)Y\rangle+\langle Z-\d_\tau gg^{-1}|(R+M^g)^{-1}(Z+\d_\tau gg^{-1}\rangle,
\label{lag3bis}
\eeq
where
\beqn
Y&=&(R+M^{g})^{-1}\bigg[(R+M^{g})\alpha_{+}+(R+M^{g})^{*}\alpha_{-}-\d_{\si}gg^{-1}\bigg]\\
Y&=&2\alpha_+-(R+M^g)^{-1}\d_+ gg^{-1}\\
Z&=&(R+M^g)\alpha_+-(R+M^g)^*\alpha_-.
\eeqn
Here $\xi_\pm=\tau\pm\si$ and $\d_\pm=\d_\tau\pm\d_\si$ are the light-cone coordinates.\\
Why this apparent complication ? Because, we want to study the variation problem, in which $\hat{h}$ varies while $g$ remains fixed. Indeed, it turns out that there is a choice of the skew adjoint operator $M^g$ for which the $Z$ part of the action  \eqref{lag3bis} depends only on $g$, and all dependence on $\hat{h}$ is hidden only in the quantity $Y$. This choice is simply $M^g=\Pi^g$, because of the relation \eqref{relation1}, we see that $Z=\d_\si gg^{-1}$ and is $\hat{h}$ independent. Futhermore, the following relation (Eq.\eqref{relationY}) between $Y$ and $\d_\si \hat{h}\hat{h}^{-1}$ becomes invertible (i.e. one can express $\d_\si\hat{h}\hat{h}^{-1}$ in term of $Y$) since it only involves the invertible operators  $(R+\Pi^g)$, $(R+R^*)$ and $\hat{J}\Ad_g\hat{J} $. The relation relating the $Y$ variable to $\d_\si \hat{h}\hat{h}^{-1}$, is given by:
\beqn
Y&=&2(R+R^*)^{-1}(R+\Pi^g)^*\hat{J}\Ad_g\hat{J}\d_\si \hat{h}\hat{h}^{-1}\nonumber \\
&&+2(R+R^*)^{-1}\d_\si gg^{-1}-(R+\Pi^g)^{-1}\d_+gg^{-1},  
\label{relationY}
\eeqn
Thus in the minimization of the action \eqref{lag3bis} with $M^g=\Pi^g$, the variation of the $n$ components of the variable $\d_\si \hat{h}\hat{h}^{-1}$ is equivalent to the variation of the $n$ variables in $Y$.\\
Consequently, in the action \eqref{lag3bis}  the quadratic term containing $Y$ simply disappears upon the variation since there is no linear $Y$ term. Upon replacing $Z$ by $\d_\si gg^{-1}$ we conclude  that the $\hat{h}$-variation leads to the standard Poisson-Lie second order action\cite{KS2}:
\beq
\S(g)=\frac{1}{2}\int d\xi_+d\xi_-\langle \partial_-gg^{-1}|(R+\Pi^g)^{-1}\partial_+gg^{-1}\rangle,\quad g \in G.
\label{action2}
\eeq
\paragraph{Duality:}
Where is the duality in all the formalism which we have just presented ? To see it, we define a linear operator $\hat{R}:\G\to\hat{\G}$ given by
\beq \hat{R}=R^{-1},
\label{duality1}
\eeq
and consider its adjoint $\hat{R}^*:\G\to\hat{\G}$. Obviously Eq.\eqref{duality1} implies that $\Ker \hat{R}=\Ker \hat{R}^*=\{0\}$ and $\Ker (\hat{R}+\hat{R}^*)=\{0\}$. From the Theorem 1, we conclude that there exists an unique set of linear operators $\hat{\alpha}_\pm:\D\to\G$ such that for any $X\in\D$ we can decompose it as:
\beq
X=\hat{\alpha}_+(X)+\hat{\alpha}_-(X)+\hat{R}\hat{\alpha}_+(X)-\hat{R}^*\hat{\alpha}_-(X).
\eeq
Now we can easily verify that the quadratic form $K$ on $\D$ defined by \eqref{hamiltonien} can be rewritten as:
\beq
K(X)=2\langle \hat{\alpha}_+|\hat{R}\hat{\alpha}_+\rangle+2\langle \hat{\alpha}_-|\hat{R}\hat{\alpha}_-\rangle.
\label{dual_quadratic}
\eeq
Indeed, Eq.\eqref{dual_quadratic} comes from the fact that the subspaces $\R$ and $\R^\bot$  can be expressed in term of the $\hat{R}$ operators:
\beq
\R=\{\hat{\gamma}+\hat{R}\hat{\gamma},\hat{\gamma}\in\G\} \textrm{ and } \R^\bot=\{\hat{\gamma}-\hat{R}^*\hat{\gamma},\hat{\gamma}\in\G\}
.\eeq
Mimicking the same steps as previously, where we vary $h(\si,\tau)$ in the dual ansatz $l(\si,\tau)=\hat{g}(\si,\tau)h(\si,\tau) \in D$ with $\hat{g}(\si,\tau)\in\hat{G}$ and $h(\si,\tau)\in G$, we thus obtain the dual action on $\hat{G}$:
\beq
S(\hat{g})=\frac{1}{2}\int d\xi_+d\xi_-\langle \partial_- \hat{g}\hat{g}^{-1}|\big(R^{-1}+\hat{\Pi}^{\hat{g}}\big)^{-1}\partial_+ \hat{g}\hat{g}^{-1}\rangle,\quad \hat{g} \in \hat{G},
\eeq
where $\hat{\Pi}^g:\G\to\hat{\G}$ the Poisson-Lie operator on $\hat{\G}$.\medskip \\

\noindent We point out already in this section, that more general models appear when the kernel of the operator $R$ or/and the kernel of the sum $R+R^*$ is/are not trivial. In the Section 4 and 5, we will consider all possibilities just mentioned, and relate them to models already considered in Refs.\cite{KS1}, and \cite{Sfet1}. We note that the case where both kernels are not trivial has not been considered before and will constitute the principal original result of this paper.

\section{Models with $\Ker R=\Ker R^{*}=\{0\}$ and $\Ker (R+R^*)\ne\{0\}$}
In this part we suppose that $\Ker(R+R^*)\neq \{0\}$ while still $\Ker R=\Ker R^*=\{0\}$ and we show that this generalization leads again to a duality between two second order $\si$-models living on different target spaces. Moreover, under some further conditions on the operator $R$, the second order $\si$-models develop an intriguing gauge symmetry which is a novel aspect comparing to the case $\Ker(R+R^*)=\{0\}$.
\medskip
\\
As before the main structural ingredients of our story are the two operators $R:\hat{\G}\to\G$, $R^*:\hat{\G}\to\G$. Define $\hat{R}:\mathcal{G}\to\hat{\G}$ by the relation:
$$ \hat{R}=R^{-1}.$$
It follows that the kernel $\Ker(\hat{R}+\hat{R}^*)$ can be simply expressed as:
\beq
\Ker(\hat{R}+\hat{R}^*)=R\Ker(R+R^*).
\label{noyaux1}
\eeq
\paragraph{First order action:} Let us introduce the following linear subspaces of $\D$:
\beqn
\R&=&\{\gamma+R\gamma,\gamma\in\hat{\G}\} \label{Rspace}\\
\R^\bot&=&\{\gamma-R^*\gamma,\gamma\in\hat{\G}\} \label{Rbotspace}\\
\F&=&\{\gamma+R\gamma,\gamma\in\Ker (R+R^*)\}=\{\gamma-R^*\gamma,\gamma\in\Ker (R+R^*)\}.
\eeqn
Note that $\R^\bot$ is orthogonal to $\R$ as the notation suggests and $\F$ is necessarily isotropic ($\langle \F|\F\rangle=0$) since $\F=\R\cap\R^\bot$. We shall soon show (cf. Theorem 2) that the orthogonal complement $\F^\bot$ of $\F$ in $\D$ can be written as $\F^\bot=\R+\R^\bot$. We stress, however, that this is just the sum and not the direct sum of linear subspaces since $\R\cap\R^\bot\ne\{0\}$.\\
 Consider a $D$-valued field $l(\si,\tau)$ and an $\F$-valued field $\L(\si,\tau)$. Now write the first order action as:
\beq \S(l)=\frac{1}{2}\int \langle\partial_\tau ll^{-1}|\partial_\sigma ll^{-1}\rangle+\frac{1}{6}d^{-1}\langle dll^{-1}\wedge[dll^{-1}\wedge dll^{-1}]\rangle+ K(\partial_\sigma ll^{-1})+\langle \L|\partial_\sigma ll^{-1}\rangle.
\label{action3}
\eeq
Note that the field $\L$ is the Lagrange multiplier the variation of which imposes the constraint $\partial_\sigma ll^{-1}\in\F^\perp$.\\ 
Because of this new constraint, it is sufficient to define the quadratic form $K$ only on $\F^\bot$. For that we have to adapt the Theorem 1 to this new case.
\begin{theo} There exists a pair of linear operators $\alpha_\pm:\F^\bot\to\hat{\G}$ such that $X\in \F^\bot$ can be decomposed  as:
\beq
X=\alpha_+(X)+\alpha_-(X)+R\alpha_+(X)-R^*\alpha_-(X).
\label{decomposition2}
\eeq
\end{theo}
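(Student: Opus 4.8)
The plan is to describe $\F^\bot$ concretely as the (non-direct) sum $\R+\R^\bot$ and then to extract $\alpha_\pm$ by committing to one auxiliary linear choice. The point to keep in mind throughout is that, in contrast with Theorem 1, the decomposition \eqref{decomposition2} can no longer be unique once $\F=\R\cap\R^\bot$ is nonzero, so the strategy has to be adapted accordingly.

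First I would collect the elementary linear-algebra facts about the three subspaces $\R$, $\R^\bot$ and $\F$. Since $\Ker R=\Ker R^*=\{0\}$, the maps $\gamma\mapsto\gamma+R\gamma$ and $\gamma\mapsto\gamma-R^*\gamma$ embed $\hat\G$ into $\D$, so $\dim\R=\dim\R^\bot=n$; moreover $\gamma$ is recovered from $\gamma+R\gamma$ (resp. $\gamma-R^*\gamma$) by the projection $\hat J$, which makes the inverses of these embeddings linear. Using the isotropy of $\G$ and $\hat\G$ together with the adjointness relation $\langle R\gamma|\gamma'\rangle=\langle\gamma|R^*\gamma'\rangle$, one checks $\langle\R|\R^\bot\rangle=0$, so $\R^\bot$ really is the orthogonal complement of $\R$. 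Projecting a generic element of $\R\cap\R^\bot$ onto $\G$ and onto $\hat\G$ then gives $\R\cap\R^\bot=\F$, and in the process shows that $\gamma\mapsto\gamma+R\gamma$ restricts to a linear isomorphism $\Ker(R+R^*)\to\F$, in particular $\dim\F=\dim\Ker(R+R^*)$.

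Next I would establish $\F^\bot=\R+\R^\bot$. Because $\F\subset\R$ and $\F\subset\R^\bot$ while $\langle\R|\R^\bot\rangle=0$, the space $\R+\R^\bot$ is orthogonal to $\F$, hence $\R+\R^\bot\subseteq\F^\bot$. For the reverse inclusion it is enough to match dimensions: inclusion–exclusion gives $\dim(\R+\R^\bot)=2n-\dim\F$, while non-degeneracy of $\langle.|.\rangle$ gives $\dim\F^\bot=\dim\D-\dim\F=2n-\dim\F$. Thus every $X\in\F^\bot$ admits (a priori non-unique) $\gamma_\pm\in\hat\G$ with $X=(\gamma_++R\gamma_+)+(\gamma_--R^*\gamma_-)$, which is exactly \eqref{decomposition2} with $\alpha_\pm(X)=\gamma_\pm$.

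It remains to turn $\alpha_\pm$ into genuine linear operators, and this is where the present case genuinely differs from Theorem 1. Since $\R\cap\R^\bot=\F\neq\{0\}$, the splitting of $X$ into an $\R$-part and an $\R^\bot$-part is not unique, so I would fix once and for all a linear complement $\W$ of $\F$ in $\R$, i.e. $\R=\F\oplus\W$. Then $\W\cap\R^\bot\subseteq\W\cap(\R\cap\R^\bot)=\W\cap\F=\{0\}$, and the dimension count $\dim\W+\dim\R^\bot=(n-\dim\F)+n=\dim\F^\bot$ upgrades $\F^\bot=\R+\R^\bot$ to a genuine direct sum $\F^\bot=\W\oplus\R^\bot$. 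Now each $X\in\F^\bot$ decomposes uniquely as $X=w+w^\bot$ with $w\in\W$ and $w^\bot\in\R^\bot$; applying the linear inverse embeddings recovers unique $\gamma_+,\gamma_-\in\hat\G$ with $w=\gamma_++R\gamma_+$ and $w^\bot=\gamma_--R^*\gamma_-$, and $\alpha_\pm(X):=\gamma_\pm$ are the desired linear operators. The main obstacle is thus purely conceptual rather than computational, namely recognizing that one must commit to a choice of $\W$; different choices produce different but equally admissible pairs $\alpha_\pm$, and I would add a remark that the second-order action to be derived below does not depend on this choice.
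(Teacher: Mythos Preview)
Your argument is correct and follows the same core strategy as the paper: prove $\F^\bot=\R+\R^\bot$ and then read off $\alpha_\pm$. Two differences are worth noting. First, for the dimension count the paper does not invoke the general identity $\dim\F^\bot=\dim\D-\dim\F$; instead it builds an explicit complement $W$ of $\Ker(R+R^*)^0$ inside $\G$, checks $W\cap\F^\bot=\{0\}$ by pairing against elements $\gamma+R\gamma$ with $\gamma\in\Ker(R+R^*)$, and deduces $\dim\F^\bot\le 2n-\dim\F$. Your route via non-degeneracy of $\langle.|.\rangle$ is shorter and equally valid. Second, the paper stops once it knows $\F^\bot=\R+\R^\bot$, simply identifying a (non-unique) pair $(\alpha,\beta)$ for each $X$ with $\alpha_\pm(X)$ and relegating the ambiguity to a subsequent remark; your extra step of fixing a complement $\W$ of $\F$ in $\R$ to force $\F^\bot=\W\oplus\R^\bot$ is what actually makes $\alpha_\pm$ \emph{linear} operators, so in that respect your proof is more complete than the paper's. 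Your closing remark that the resulting quadratic form (and hence the action) is independent of the choice of $\W$ matches the paper's own observation that the transformation $\alpha_\pm\to\alpha_\pm\pm\Phi$ with $\Phi:\F^\bot\to\Ker(R+R^*)$ leaves $K$ unchanged.
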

\begin{proof}
Let us consider the subspace of $\D$ formed by the sum $\R+\R^\bot$, and let us take any element $X\in \R+\R^\bot$. Then,  with the definitions \eqref{Rspace}\eqref{Rbotspace}, there exists couples $(\alpha, \beta)\in\hat{\G}$ such that:
\beq
X=\alpha+R\alpha+\beta-R^*\beta.
\eeq
Let us show that $\R+\R^\bot$ is orthogonal to $\F$, for that consider any element $\gamma \in \Ker(R+R^*)$ and calculate the following expression for $X\in \R+\R^\bot$:
\beqn
\langle X|\gamma+R\gamma\rangle&=&\langle \alpha+R\alpha+\beta-R^*\beta,\gamma+R\gamma \rangle \nonumber \\
&=&\langle\alpha+\beta|R\gamma\rangle+\langle R\alpha- R^*\beta|\gamma\rangle \nonumber \\
&=& \langle\alpha+\beta|R\gamma\rangle+\langle \alpha|R^*\gamma\rangle-\langle \beta|R\gamma\rangle \nonumber \\
&=&\langle \alpha|(R+R^*)\gamma\rangle \nonumber \\
&=&0. \nonumber \eeqn
The result is that $\R+\R^\bot \subset \F^\bot$. Let us analyze the dimension of the $\R+\R^\bot$ space, we have the equality:
\beqn
\dim(\R+\R^\bot)&=&\dim\R+\dim\R^\bot-\dim(\R\cap\R^\bot) \nonumber \\
&=&2n-\dim\F.
\label{dim1}
\eeqn
Denoting $\Ker(R+R^*)^0=\{x\in\G/\forall \gamma \in \Ker(R+R^*), \langle x|\gamma\rangle=0\}$ the annihilator of $\Ker(R+R^*)$ in $\G$, we consider any complement subspace $W$ of $\Ker(R+R^*)^0$ in $\G$, which means that we can write  $W\oplus\Ker(R+R^*)^0=\G$ and that $\dim W=\dim \F$. Let us introduce the following subspace $W+\F^\bot\subset \D$, and take any non null $v\in W$. Then since $\Ker(R+R^*)^0\cap W=\{0\}$  there exists a $\gamma\in \Ker(R+R^*)$ such that $\langle v|\gamma+R\gamma\rangle=\langle v |\gamma\rangle \ne 0$, therefore $W\cap\F^\bot=\{0\}$. Consequently, $W+\F^\bot=W\oplus\F^\bot$ and we obtain for the dimensions $\dim \F^\bot \le 2n-\dim \F$. Moreover, we have $\R+\R^\bot \subset \F^\bot$ which implies with \eqref{dim1} that $\dim \F^\bot=\dim(\R+\R^\bot)$ and the equality $ \F^\bot=\R+\R^\bot$.
\\
Therefore, for all $X\in\F^\bot$ there exists a (non-unique) couple $(\alpha, \beta)\in\hat{\G}$ such that $X=\alpha+R\alpha+\beta-R^*\beta$, and by identifying $\alpha\equiv\alpha_+(X)$ and $\beta\equiv\alpha_-(X)$ we obtain the proof.
\end{proof}
 \noindent  We can now define the quadratic form $K$ on $\F^\bot$ in terms of the operators $\alpha_\pm$:
\beq
K(X)=2\langle \alpha_+(X),R\alpha_+(X)\rangle+2\langle \alpha_-(X),R\alpha_-(X)\rangle,\; \forall X\in \F^\bot.
\label{quadratic_form}
\eeq
{\bf Remark:} We stress that the choice of $\alpha_\pm$ is not unique contrary to the case of the Section 2. Indeed, if we take any linear operator $\Phi:\F^\bot\to\Ker(R+R^*)$, the decomposition \eqref{decomposition2} remains unchanged by the tranformation $\alpha_\pm\to\alpha_\pm\pm\Phi$. Inspite of this ambiguity, we can check that the quadratic form $K$ \eqref{quadratic_form} is defined on $\F^\bot$ unambiguously, i.e. its value on $X\in \F^\bot$ does not depend on the choice of $\alpha_\pm$.
\medskip
\\
We can also easily determine the equations of motion verified by the $l$ field. By taking into account the constraint $\partial_\sigma ll^{-1}\in\mathcal{F}^\bot$, the extremalization  of the action \eqref{action3} leads to the result :
\beq
\partial_\tau ll^{-1} +\alpha_+(\partial_\sigma ll^{-1})+R\alpha_+(\partial_\sigma ll^{-1})-\alpha_-(\partial_\sigma ll^{-1})+R^*\alpha_-(\partial_\sigma ll^{-1})\in\F,
\label{motion}
\eeq
or written in the light-cone coordinates:
\beq
\alpha_+(\partial_+ ll^{-1})+R\alpha_+(\partial_+ ll^{-1})+\alpha_-(\partial_- ll^{-1})-R^*\alpha_-(\partial_- ll^{-1})\in\F.
\eeq
\paragraph{The second order action:}
In this paragraph we will present an algebraic derivation of the dual pair of the $\sigma$ models from the first order action \eqref{action3}. As before, since the double $D$ is perfect, we can represent any field $l(\sigma,\tau)\in D$ as $l(\sigma,\tau)=g(\sigma,\tau)\hat{h}(\sigma,\tau)$ with $g(\sigma,\tau)\in G$ and $\hat{h}(\sigma,\tau)\in\hat{G}$. With this parametrization, we can write the action \eqref{action3} as:
\beq
\S(g,\hat{h},\L)=\int \langle \partial_\tau gg^{-1}|\partial_\sigma l l^{-1}\rangle+\langle\partial_\sigma ll^{-1}|\Lambda \rangle +\frac{1}{2}K(\partial_\sigma l l^{-1}) .
 \label{action4}
\eeq
The constraint is solved by restricting the variable $\partial_\sigma l l^{-1}$ on $\F^\bot$, which, following the Theorem 2,  permits to perform the decomposition $\partial_\sigma ll^{-1}=\alpha_++\alpha_-+R\alpha_+-R^*\alpha_-$. The action then becomes simply:
\beqn
\mathcal{S}(g,\hat{h})=\int\langle \partial_\tau gg^{-1}| \alpha_++\alpha_-\rangle+\langle\alpha_+|R\alpha_+\rangle+\langle \alpha_-|R\alpha_- \rangle.
\eeqn
The elements $\alpha_\pm$ of $\hat{\G}$ are still related by the relation \eqref{relation1} as it can be seen by exactly the same reasoning as in Section 2:
\beq
(R+\Pi^g)\alpha_+-(R+\Pi^g)^*\alpha_-=\partial_\sigma gg^{-1}.
\eeq
And from the first order action \eqref{action3}, we derive the same action as Eq.\eqref{lag3bis} with $M^g=\Pi^g$ directly: 
\beq
S(g,\hat{h})=-\frac{1}{2}\int \langle Y|(R+\Pi^g)Y\rangle+\langle Z-\d_\tau gg^{-1}|(R+\Pi^g)^{-1}(Z+\d_\tau gg^{-1}\rangle,
\label{lag4bis}
\eeq
where
\beqn
Y&=&2\alpha_+-(R+\Pi^g)^{-1}\d_+ gg^{-1}
\label{relationY2}
\\
Z&=&(R+\Pi^g)\alpha_+-(R+\Pi^g)^*\alpha_-=\d_\si gg^{-1}.
\eeqn
The action \eqref{lag4bis} does not depend of the choice of the $\alpha_\pm$ operators. Indeed, if we make the transformation $\alpha_\pm \to \alpha_\pm\pm\Phi$ with $\Phi \in \Ker(R+R^*)$, the action \eqref{lag4bis} remains invariant. To see it, in a first time we observe that the quantity $Z$ is unchanged under that transformation. Secondly, under the same transformation the quantity $Y$ becomes $Y+2\Phi$, and the quadratic form $\langle Y|(R+\Pi^g)Y\rangle$ becomes:
\beqn
\langle Y+2\Phi|(R+\Pi^g)(Y+2\Phi)\rangle&=&\frac{1}{2}\langle Y+2\Phi|(R+R^*)(Y+2\Phi)\rangle\nonumber\\
&=&\frac{1}{2}\langle Y|(R+R^*)Y\rangle \nonumber\\
&=&\langle Y|(R+\Pi^g)Y\rangle.
\eeqn
The quadratic form $\langle Y|(R+\Pi^g)Y\rangle$ is therefore naturally defined on the coset $\hat{\G}/\Ker(R+R^*)$, which makes possible to write \eqref{lag4bis} as:
\beq
S(g,\hat{h})=-\frac{1}{2}\int \frac{1}{2}\langle \pi(Y)|(R+R^*)_\pi\pi(Y)\rangle-\langle \d_- gg^{-1}|(R+\Pi^g)^{-1}\d_+ gg^{-1}\rangle,
\label{lag4ter}
\eeq
where $\pi$ is the canonical projection $\pi:\hat{\G}\to\hat{\G}\slash \Ker(R+R^*)$, and the operator $(R+R^*)_\pi$ is nothing but the operator $(R+R^*)$ acting on the coset $\hat{\G}\slash \Ker(R+R^*)$.\\
The following Lemma 1 shows that varying $\hat{h}$ in $l=g\hat{h}$ while keeping fixed $g$ and respecting the constraint $\d_\si ll^{-1}\in \F^\bot$ is the same thing as varying $\pi(Y)$. This implies that we obtain from \eqref{lag4ter} directly the second order action:
\beq
S(g)=\frac{1}{2}\int \langle \d_- gg^{-1}|(R+\Pi^g)^{-1}\d_+ gg^{-1}\rangle,\quad g\in G.
\label{lag5}
\eeq
Indeed in \eqref{lag4ter}, there is no linear term in $\pi(Y)$, therefore varying with respect to $\pi(Y)$ amounts simply to the suppression of the quadratic term in $\pi(Y)$.
\begin{lemma} It holds that:
\beq
\pi(Y)=2(R+R^*)_\pi^{-1}\big((R+\Pi^g)^*\hat{J}\Ad_g\hat{J}\d_\si \hat{h}\hat{h}^{-1}+\d_\si gg^{-1}\big)-\pi\big((R+\Pi^g)^{-1}\d_+gg^{-1}\big).
\label{pi}
\eeq
\end{lemma}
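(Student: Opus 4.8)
The plan is to obtain \eqref{pi} by reading off $\pi(Y)$ directly from the definition \eqref{relationY2} of $Y$, the only real content being the computation of $\pi(\alpha_+)$. Applying the projection $\pi$ to \eqref{relationY2} gives at once $\pi(Y)=2\pi(\alpha_+)-\pi\big((R+\Pi^g)^{-1}\d_+gg^{-1}\big)$, so everything reduces to establishing
\beq
\pi(\alpha_+)=(R+R^*)_\pi^{-1}\big((R+\Pi^g)^*\hat{J}\Ad_g\hat{J}\,\d_\si\hat{h}\hat{h}^{-1}+\d_\si gg^{-1}\big).
\eeq

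First I would assemble the two linear constraints on the (non-unique) pair $\alpha_\pm\in\hat{\G}$ furnished by Theorem 2 under the constraint $\d_\si ll^{-1}\in\F^\bot$. Projecting the decomposition $\d_\si ll^{-1}=\alpha_++\alpha_-+R\alpha_+-R^*\alpha_-$ onto $\hat{\G}$ — using $R\alpha_+,R^*\alpha_-,\d_\si gg^{-1}\in\G$ and $\d_\si ll^{-1}=\d_\si gg^{-1}+\Ad_g\d_\si\hat{h}\hat{h}^{-1}$ — reproduces \eqref{eq_pi1}, $\alpha_++\alpha_-=\hat{J}\Ad_g\hat{J}\,\d_\si\hat{h}\hat{h}^{-1}$, while the reasoning of Section 2 carries over verbatim to yield \eqref{relation1}, $(R+\Pi^g)\alpha_+-(R+\Pi^g)^*\alpha_-=\d_\si gg^{-1}$. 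Then I would eliminate $\alpha_-$: substituting $\alpha_-=\hat{J}\Ad_g\hat{J}\,\d_\si\hat{h}\hat{h}^{-1}-\alpha_+$ into the second relation and using that $\Pi^g$ is skew-symmetric, so that $(R+\Pi^g)+(R+\Pi^g)^*=R+R^*$, one arrives at
\beq
(R+R^*)\alpha_+=\d_\si gg^{-1}+(R+\Pi^g)^*\hat{J}\Ad_g\hat{J}\,\d_\si\hat{h}\hat{h}^{-1}.
\eeq

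It then remains to invert $R+R^*$ on the quotient. Since $R+R^*$ annihilates $\Ker(R+R^*)$, the left-hand side above depends only on $\pi(\alpha_+)$ and equals $(R+R^*)_\pi\,\pi(\alpha_+)$; and the right-hand side, being literally of the form $(R+R^*)\alpha_+$, lies in $\Img(R+R^*)$, where $(R+R^*)_\pi$ is invertible. (Equivalently one checks $\Img(R+R^*)=\Ker(R+R^*)^0$, the annihilator in $\G$, by the dimension count $\dim\Img(R+R^*)=n-\dim\Ker(R+R^*)=\dim\Ker(R+R^*)^0$.) Applying $(R+R^*)_\pi^{-1}$ yields the displayed formula for $\pi(\alpha_+)$, and feeding it into the expression for $\pi(Y)$ gives exactly \eqref{pi}.

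The computation is otherwise routine; the one delicate point is this last quotient step, i.e. making sure that $(R+R^*)_\pi^{-1}$ is legitimately applied to the combination appearing in \eqref{pi} — which is precisely why one wants to recognize that combination as $(R+R^*)\alpha_+$ rather than handle it term by term. It is also worth recording that the whole derivation is consistent with the ambiguity $\alpha_\pm\to\alpha_\pm\pm\Phi$, $\Phi\in\Ker(R+R^*)$: this changes neither $(R+R^*)\alpha_+$, nor $\pi(\alpha_+)$, nor $\pi(Y)$, so \eqref{pi} is well defined.
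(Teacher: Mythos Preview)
Your proof is correct and follows essentially the same route as the paper's: both derive the key identity $(R+R^*)\alpha_+=(R+\Pi^g)^*\hat{J}\Ad_g\hat{J}\,\d_\si\hat{h}\hat{h}^{-1}+\d_\si gg^{-1}$ and then pass to the quotient to invert $(R+R^*)_\pi$. Your organization is in fact slightly more economical: the paper first verifies independently, from the constraint $\d_\si ll^{-1}\in\F^\bot$, that the right-hand side lies in $\Img(R+R^*)=\Ker(R+R^*)^0$, and only then derives the identity; you derive the identity first, which makes the membership in $\Img(R+R^*)$ automatic and renders the separate well-definedness check unnecessary.
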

\begin{proof}
Firstly, we have to show that the formula \eqref{pi} is well defined, which means that the quantity $V=(R+\Pi^g)^*\hat{J}\Ad_g\hat{J}\d_\si \hat{h}\hat{h}^{-1}+\d_\si gg^{-1}$ is in the image of $R+R^*$ whereas $\d_\si ll^{-1}\in \F^\bot$.\\
Let us consider any $\d_\si ll^{-1}\in \F^\bot$, which means that for all $\gamma\in\Ker(R+R^*)$:
\beq
\langle \Ad_g\d_\si \hat{h}\hat{h}^{-1}+\d_\si gg^{-1}|\gamma+R\gamma \rangle=0,
\eeq
hence
\beq
\langle (R+\Pi^g)^*\hat{J}\Ad_g\hat{J}\d_\si \hat{h}\hat{h}^{-1}+\d_\si gg^{-1}|\gamma\rangle=\langle V| \gamma \rangle=0 .
\eeq
Therefore $V\in \Ker(R+R^*)^0\subset\G$, it remains to prove that $\Img (R+R^*)=\Ker(R+R^*)^0$.\\
We can immediately see that $ \Img (R+R^*)\subset\Ker(R+R^*)^0$, indeed for $\gamma\in \Ker(R+R^*)$ we have $\langle \gamma|(R+R^*)X\rangle=\langle(R+R^*)\gamma|X\rangle=0$, where $X\in \hat{\G}$. Moreover, since the bilinear form $\langle.|.\rangle$ is non degenerate we have the relation $\dim \Ker(R+R^*)^0=n-\dim\Ker(R+R^*)$. From the rank-nullity theorem, we obtain that $\dim\Ker(R+R^*)+\dim\Img(R+R^*)=n$, together with the previous relation we conclude that $\dim \Ker(R+R^*)^0=\dim\Img(R+R^*)$ and $\Img (R+R^*)=\Ker(R+R^*)^0$. Consequently, $V\in \Img(R+R^*)$.\\
Secondly, let us prove the relation \eqref{pi}. From the decomposition of  $\d_\si ll^{-1}\in \F^\bot$ in the Theorem 2, we obtain:
\beq
(R+R^*)\alpha_+=(R^*\hat{J}\Ad_g\hat{J}+J\Ad_g\hat{J})\d_\si \hat{h}\hat{h}^{-1}+\d_\si gg^{-1}=V.
\label{image}
\eeq
The relation \eqref{image} can be expressed in term of the quantity $Y$ (cf Eq.\eqref{relationY2}):
\beq
(R+R^*)Y=2V
-(R+R^*)(R+\Pi^g)^{-1}\d_+gg^{-1}.
\label{image2}
\eeq
Hence:
\beq
(R+R^*)_\pi \pi(Y)=2V
-(R+R^*)_\pi\pi\big((R+\Pi^g)^{-1}\d_+gg^{-1}\big).
\label{image3}
\eeq
Since $(R+R^*)_\pi$ is an invertible operator on $\hat{\G}/\Ker(R+R^*)$, by inverting \eqref{image3} we obtain the desired result:
\beq
\pi(Y)=2(R+R^*)_\pi^{-1}V-\pi\big((R+\Pi^g)^{-1}\d_+gg^{-1}\big).
\eeq
\end{proof}
\noindent{\bf Remark:}
If we denote $p$ the dimension of $\Ker(R+R^*)$, then the $n-p$ variables contained in $\pi(Y)$ are related by the invertible relation \eqref{pi} to the $n-p$ independent components of $\d_\si \hat{h}\hat{h}^{-1}$ (Note that there are only $n-p$ independant components in $\d_\si \hat{h}\hat{h}^{-1}$, because $p$ components of $\d_\si \hat{h}\hat{h}^{-1}$ are determined by the relation $\d_\si ll^{-1}\in \F^\bot$).
\medskip

\noindent The dual second order action with respect to the action \eqref{lag5} is still obtain by taking the dual ansatz $l(\si,\tau)=\hat{g}(\si,\tau)h(\si,\tau) \in D$ with $\hat{g}(\si,\tau)\in\hat{G}$ and $h(\si,\tau)\in G$, and the well-known relation for the $\hat{R}:\G\to\hat{\G}$ operator:
\beq
\hat{R}=R^{-1}.
\eeq
The same reasoning leads to the dual second order action on $\hat{G}$:
\beq
S(\hat{g})=\frac{1}{2}\int \langle \partial_- \hat{g}\hat{g}^{-1}|\big(R^{-1}+\hat{\Pi}^{\hat{g}}\big)^{-1}\partial_+ \hat{g}\hat{g}^{-1}\rangle,\quad \hat{g} \in \hat{G}.
\label{lag6}
\eeq
\paragraph{The standard dressing cosets:} 
 Suppose that the linear subspace $\F$ is a Lie subalgebra of $\D$, whose Lie group $F$ is a connected Lie subgroup of $D$. Then:
\begin{theo} The first order action \eqref{action3} develops a gauge symmetry with $F$ as the gauge group if the operator $R$ verifies:
\beq
J\Ad_f\hat{J}+J[\Ad_f,R]\hat{J}=R\Ad_f R,\; \forall f\in F.
\label{condition}
\eeq
For $f(\si,\tau)\in F$, the revelant  gauge transformations read:
 $$l(\sigma,\tau)\to f(\sigma,\tau)l(\sigma,\tau), \quad \L(\si,\tau)\to  f(\sigma,\tau)\L(\sigma,\tau)f(\sigma,\tau)^{-1}-\partial_\tau f(\si,\tau)f(\sigma,\tau)^{-1}.$$
\end{theo}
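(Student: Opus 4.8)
The plan is to substitute the transformed fields $l\mapsto fl$, $\L\mapsto f\L f^{-1}-\partial_\tau ff^{-1}$ into \eqref{action3} and to check, block by block, that the action reassembles. Split the integrand of \eqref{action3} into its kinetic-plus-Wess--Zumino part $S_{\mathrm{geom}}$, its potential part $K(\partial_\sigma ll^{-1})$, and its Lagrange-multiplier part $\langle\L|\partial_\sigma ll^{-1}\rangle$. The elementary inputs I would use are: under $l\mapsto fl$ the right Maurer--Cartan currents transform as $\partial_\tau ll^{-1}\mapsto\partial_\tau ff^{-1}+\Ad_f\partial_\tau ll^{-1}$ and $\partial_\sigma ll^{-1}\mapsto\partial_\sigma ff^{-1}+\Ad_f\partial_\sigma ll^{-1}$; the Lie algebra $\F$ of $F$ is isotropic and is a subalgebra, so $\langle\F|\F\rangle=0$ and $\langle\F|[\F,\F]\rangle=0$; and, $F$ being connected, $\Ad_f\F=\F$, whence $\Ad_f\F^\bot=\F^\bot$ by invariance of $\langle.|.\rangle$. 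This last point, together with $\F\subset\F^\bot$, shows that the constraint surface $\{\partial_\sigma ll^{-1}\in\F^\bot\}$ underlying \eqref{action3} is preserved by the gauge transformation, so the computation may be carried out there.

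For the geometric part I would invoke the Polyakov--Wiegmann identity already used to pass from \eqref{action1} to \eqref{lag1}, namely $S_{\mathrm{geom}}(g_1g_2)=S_{\mathrm{geom}}(g_1)+S_{\mathrm{geom}}(g_2)+\int\langle g_1^{-1}\partial_\tau g_1|\partial_\sigma g_2g_2^{-1}\rangle$, applied with $g_1=f$, $g_2=l$. Since $\F$ is isotropic and closed under bracket, both the kinetic term $\langle\partial_\tau ff^{-1}|\partial_\sigma ff^{-1}\rangle$ and the Wess--Zumino integrand $\langle dff^{-1}\wedge[dff^{-1}\wedge dff^{-1}]\rangle$ vanish, so $S_{\mathrm{geom}}(f)=0$ and the geometric part picks up exactly $\int\langle\partial_\tau ff^{-1}|\Ad_f\partial_\sigma ll^{-1}\rangle$. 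In the transformed multiplier term $\int\langle\Ad_f\L-\partial_\tau ff^{-1}\,|\,\partial_\sigma ff^{-1}+\Ad_f\partial_\sigma ll^{-1}\rangle$, the contractions $\langle\L|f^{-1}\partial_\sigma f\rangle$ and $\langle\partial_\tau ff^{-1}|\partial_\sigma ff^{-1}\rangle$ vanish by isotropy of $\F$, and invariance of $\langle.|.\rangle$ reduces the remainder to $\int\langle\L|\partial_\sigma ll^{-1}\rangle-\int\langle\partial_\tau ff^{-1}|\Ad_f\partial_\sigma ll^{-1}\rangle$. The inhomogeneous contribution is the exact opposite of the one coming from $S_{\mathrm{geom}}$ --- which is precisely what forces the shift of $\L$ to be by $-\partial_\tau ff^{-1}$ --- so the two cancel.

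It remains to show the potential term is unchanged, $K(\partial_\sigma ff^{-1}+\Ad_f\partial_\sigma ll^{-1})=K(\partial_\sigma ll^{-1})$, which is where \eqref{condition} is used and which is the heart of the matter. The first step is to reinterpret \eqref{condition}: expanding $[\Ad_f,R]$ and carrying out the compositions with the projectors $J,\hat J$ (using that $R$ takes values in $\G$ and is fed through $\hat J$), one finds that \eqref{condition} is equivalent to $J\Ad_f(\ga+R\ga)=R\hat J\Ad_f(\ga+R\ga)$ for all $\ga\in\hat\G$, i.e. to $\Ad_f\R\subset\R$ with $\R=\{\ga+R\ga,\,\ga\in\hat\G\}$; by equality of dimensions $\Ad_f\R=\R$, and hence also $\Ad_f\R^\bot=\R^\bot$, $\Ad_f\F=\F$, $\Ad_f\F^\bot=\F^\bot$. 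The second step uses Theorem 2 to check that for $X\in\F^\bot=\R+\R^\bot$, written (non-uniquely) as $X=r+r^\bot$ with $r\in\R$, $r^\bot\in\R^\bot$, one has $K(X)=\langle r|r\rangle-\langle r^\bot|r^\bot\rangle$ --- independent of the decomposition since $\F=\R\cap\R^\bot$ is isotropic --- so that the $\Ad_f$-stability of $\R,\R^\bot$ and invariance of $\langle.|.\rangle$ give $K\circ\Ad_f=K$ on $\F^\bot$, while $\F=\R\cap\R^\bot$ forces $K|_\F=0$ and $\tilde K(\F,\F^\bot)=0$ for the associated bilinear form $\tilde K$. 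The third step expands $K(\partial_\sigma ff^{-1}+\Ad_f\partial_\sigma ll^{-1})=K(\Ad_f\partial_\sigma ll^{-1})+2\tilde K(\partial_\sigma ff^{-1},\Ad_f\partial_\sigma ll^{-1})+K(\partial_\sigma ff^{-1})$ and, using $\partial_\sigma ff^{-1}\in\F$ together with the three facts just established, collapses it to $K(\partial_\sigma ll^{-1})$. Putting the three blocks back together yields $\S(fl,\,f\L f^{-1}-\partial_\tau ff^{-1})=\S(l,\L)$, which is the asserted gauge invariance.

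I expect the main obstacle to be the equivalence ``\eqref{condition}$\,\Leftrightarrow\,\Ad_f\R=\R$'': proving it cleanly requires careful bookkeeping of the implicit compositions with $J,\hat J$ in \eqref{condition} and the observation that $\R=\{X\in\D:\,JX=R\hat JX\}$. Once this is in hand, the Polyakov--Wiegmann step, the isotropy cancellations, and the $\Ad_f$-invariance of $K$ are routine linear algebra. One point worth a line: because the Wess--Zumino term is written through $d^{-1}$, both $S_{\mathrm{geom}}(f)=0$ and the Polyakov--Wiegmann identity hold only modulo a closed three-form, i.e. a worldsheet boundary term, which is immaterial for the variational problem.
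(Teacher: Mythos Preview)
Your proof is correct and follows the same overall route as the paper: apply Polyakov--Wiegmann to the geometric piece, reinterpret \eqref{condition} as $\Ad_f$-stability of $\R$ (hence of $\R^\bot,\F,\F^\bot$), and deduce the $\Ad_f$-invariance of $K$ on $\F^\bot$. Two differences are worth noting. First, you treat the Lagrange-multiplier term explicitly and show that the shift $-\partial_\tau ff^{-1}$ in $\L$ cancels the Polyakov--Wiegmann cross piece \emph{without} invoking the constraint; the paper instead kills that cross piece directly by $f^{-1}\partial_\tau f\in\F$, $\partial_\sigma ll^{-1}\in\F^\bot$ and leaves the $\L$ bookkeeping implicit. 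Your organisation has the merit of explaining why the inhomogeneous shift of $\L$ is precisely $-\partial_\tau ff^{-1}$. Second, for the invariance of $K$ you use the signature description $K(X)=\langle r|r\rangle-\langle r^\bot|r^\bot\rangle$ for $X=r+r^\bot\in\R+\R^\bot$, from which $K\circ\Ad_f=K$ is immediate once $\Ad_f\R=\R$ and $\Ad_f\R^\bot=\R^\bot$; the paper instead extracts relations of the type \eqref{eq1}--\eqref{eq2} between $\alpha_+(\Ad_fX)$ and $\Ad_f\alpha_+(X)$ and verifies $\langle\alpha_\pm(\Ad_fX)|R\alpha_\pm(\Ad_fX)\rangle=\langle\alpha_\pm(X)|R\alpha_\pm(X)\rangle$ by exhibiting the skew-symmetry of two auxiliary operators $\Xi_1^f,\Xi_2^f$. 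Your argument here is shorter and more conceptual; the paper's stays in the $\alpha_\pm$ language used throughout and makes the computation reusable for Theorem~5.
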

\noindent {\bf Remark:} The condition \eqref{condition} is equivalent to the stability of the subspaces $\R$ and $\R^\bot$ by the adjoint action of $F$ as required in \cite{KS1}. This fact will show up in the proof of the Theorem 3. Furthermore, in Section 5, we will show that the previous hypothesis (isotropy of $\F$ and the condition \eqref{condition}) responsible for the gauge invariance of the first order action \eqref{action3}, will imply also a gauge invariance for the second order actions \eqref{lag5} and \eqref{lag6}. 
\begin{proof}
\noindent  We use the well-known Polyakov-Wiegmann formula \cite{PW}:
\beq
\S_0(fl)=\S_0(l)+\S_0(f)+\int \langle f^{-1}\partial_\tau f |  \partial_\sigma ll^{-1} \rangle,
\eeq
where
\beq
\S_0(l)=\frac{1}{2}\int \langle\partial_\tau ll^{-1}|\partial_\sigma ll^{-1}\rangle+\frac{1}{6}d^{-1}\langle dll^{-1}\wedge[dll^{-1}\wedge dll^{-1}]\rangle,
\eeq
and the formula 
\beq
\partial_\sigma ll^{-1} \to \partial_\sigma ff^{-1}+\Ad_f \partial_\sigma ll^{-1}.
\label{transformation}
\eeq
The isotropy of   $\F$ gives immediatly $\S_0(f)=0$, and with the constraint $\partial_\sigma ll^{-1}\in\F^\bot$ both ensure the gauge invariance of $\S_0(l)$.\\
Concerning gauge invariance of the term containing the quadratic form $K$ in the action \eqref{action3}, we can show that the transformation \eqref{transformation} leads to the following expression $K(\partial_\sigma ll^{-1})\to K( \partial_\sigma ff^{-1}+\Ad_f \partial_\sigma ll^{-1})$. By remarking that $\alpha_\pm( \partial_\sigma ff^{-1})\in\Ker(R+R^*)$, the expression of the quadratic form becomes simply $ K( \partial_\sigma ff^{-1}+\Ad_f \partial_\sigma ll^{-1})=K(\Ad_f \partial_\sigma ll^{-1})$. We will prove that $K(\Ad_f \partial_\sigma ll^{-1})=K(\partial_\sigma ll^{-1})$ in several steps.\\
Firstly, we will show that taking $X\in \F^\bot=\R+\R^\bot$ then $\Ad_f \F^\bot\subset \F^\bot$. Since $X\in \F^\bot$ it can be decomposed by using the relation \eqref{decomposition2} as $X=\alpha+R\alpha+\beta-R^*\beta$ with $(\alpha,\beta) \in \hat{\G}$. Thus if we act on $X\in\R+\R^\bot$ by the adjoint action of any element $f\in F$, we obtain on $\G$ and $\hat{\G}$:
\beqn
&\delta\equiv J\Ad_f\hat{J} \alpha+J\Ad_f J R \alpha& \in \G \\
&\hat{\delta}\equiv\hat{J}\Ad_f\hat{J} \alpha+\hat{J}\Ad_f J R \alpha& \in \hat{\G}.
\label{rela}
\eeqn
Moreover, the relation \eqref{rela} can be modified by using the condition \eqref{condition} on $R$ as:
\beqn
\hat{J}\Ad_f\hat{J} \alpha+\hat{J}\Ad_f J R \alpha&=&R\hat{J}\Ad_f\hat{J} \alpha+R\hat{J}\Ad_f J R \alpha \\
&=&R\hat{\delta}.
\eeqn
It implies that $\Ad_f \R\subset \R$. The same reasoning holds with the quantity $\beta$ which leads to $\Ad_f \R^\bot\subset \R^\bot$, and consequently $\Ad_f \F^\bot\subset \F^\bot$.\\
Secondly, since $\Ad_f X\in \F^\bot$ for $X\in \F^\bot$, we can use now the relation \eqref{decomposition2} to decompose it as:
\beqn
\Ad_f X&=&\Ad_f\alpha_+(X)+\Ad_f\alpha_-(X)+\Ad_fR\alpha_+(X)-\Ad_fR^*\alpha_-(X) \\
\Ad_f X&=& \alpha_+(\Ad_fX)+\alpha_-(\Ad_fX)+R\alpha_+(\Ad_fX)-R^*\alpha_-(\Ad_fX).
\eeqn 
Then, on the subspace $\R$ the equality between the previous relations leads to:
\beqn
J\Ad_f \alpha_+(X)+J\Ad_f R\alpha_+(X)&=&R\alpha_+(\Ad_fX) \label{eq1} \\
\hat{J}\Ad_f \alpha_+(X)+\hat{J}\Ad_f R\alpha_+(X)&=&\alpha_+(\Ad_fX).  \label{eq2}
\eeqn
Thirdly, let us take for example the first term of $K(\Ad_f X)$, and use the relations \eqref{eq1} and \eqref{eq2} to write:
\beqn
\langle \alpha_+(\Ad_f X)| R \alpha_+(\Ad_f X)\rangle&=&\langle R\alpha_+(X)|\alpha_+(X)\rangle+\langle \Xi_1^fR\alpha_+(X)|R\alpha_+(X)\rangle\nonumber \\
&&+\langle \Xi_2^f\alpha_+(X)|\alpha_+(X)\rangle.
\eeqn
Since $\Xi_1^f=\hat{J}\Ad_{f^{-1}}\hat{J}\Ad_fJ$ and $\Xi_2^f=J\Ad_{f^{-1}}\hat{J}\Ad_f\hat{J}$ are two skew-symetric operators in a symetric form, thus the last two terms are null. The case $\langle \alpha_-(\Ad_f X)| R \alpha_-(\Ad_f X)\rangle$ can be done in a similar way.
\end{proof}
 \medskip

\section{Models with $\Ker R=\Ker R^{*}\neq\{0\}$ and $\Ker(R+R^{*})\neq\{0\}$}
In Section 3, we demanded from the operators $R$ and $\hat{R}$ to have the trivial kernel; it is this last constraint which will be relaxed in that current section. We will see that, under some hypothesis on the operators $R$ and $\hat{R}$, the duality pair of non linear $\sigma$-models can be established even in this more general case.
\paragraph{Preliminaries:} We start with two operators $R:\Dom(R)\to\Img (R)$ and $\hat{R}:\Dom(\hat{R})\to\Img(\hat{R})$, where $\Dom R$ and $\Img(\hat{R})$ are subsets of $\hat{\G}$, and $\Dom \hat{R}$ and $\Img(R)$ are subsets of $\G$. Those operators verify the following hypothesis:
\begin{enumerate}
\item $\Ker R= \Ker R^*,\;\Ker \hat{R}= \Ker\hat{ R}^*$ 
\item $(R|_{\Img \hat{R}})^{-1}=\hat{R}|_{\Img R}$
\item $(\Ker \hat{R})^0=\Dom R$ and $(\Ker R)^0 = \Dom \hat{R}$
\item $\Img \hat{R}\subset\Dom R$ and $\Img R \subset \Dom \hat{R}$
\item $\Dom R=\Dom R^*$, $\Dom \hat{R}=\Dom \hat{R}^*$
\item The bilinear form $\langle.|.\rangle$ restricted to $\Img R\oplus \Img \hat{R}$ is non degenerate
\end{enumerate}
The first hypothesis is exactly the same as previously, the second one ensures the duality, the third one ensures the orthogonality of certain subspaces $\R$ and $\R^\bot$ of $\D$ that will generalize the subspaces $\R$ and $\R^{\bot}$ considered in Section 2 and 3, and the importance of the fifth shows up in the definition of the quadratic form $K$.\\
\noindent{\bf Remarks:} Firstly, the requirements 1-4 imply the following decompositions for the subspaces $\Dom R$ and $\Dom \hat{R}$:
\beq
\Dom R= \Img \hat{R}\oplus \Ker R, \quad \Dom \hat{R}=\Img R\oplus \Ker\hat{R}.
\eeq
Secondly, the relation \eqref{noyaux1} between the kernels of $R+R^*$ and $\hat{R}+\hat{R}^*$ is changed, indeed both operators $R$ and $\hat{R}$ have non trivial kernels, with the result that the link between the kernels becomes: \beq \Ker (\hat{R}+\hat{R}^*)|_{\Img R}=R|_{\Img\hat{R}}    \Ker (R+R^*)|_{\Img \hat{R}}.\eeq

In our new case, the linear subspaces $\R$, $\R^\bot$ and $\F$ are defined as follows:
\beqn
\R&=&\Ker \hat{R}\oplus\{\gamma+R\gamma,\gamma\in\Dom R\} \label{Rspace2}\\
\R^\bot&=&\Ker \hat{R}\oplus\{\gamma-R^*\gamma,\gamma\in\Dom R^*\} \label{Rbotspace2}\\
\F=\R\cap\R^\bot&=&\Ker \hat{R}\oplus\{\gamma+R\gamma,\gamma\in\Ker (R+R^*)\} \nonumber \\
&=&\Ker \hat{R}\oplus\{\gamma-R^*\gamma,\gamma\in\Ker (R+R^*)\}.
\eeqn
\noindent{\bf Remark:} The subspaces $\R$ and $\R^\bot$ can be rewritten in a duality invariant way:
\beqn
\R&=&\Ker \hat{R}\oplus\Ker R\oplus\{\gamma+R\gamma,\gamma\in\Img \hat{R}\} \\
&=&\Ker \hat{R}\oplus\Ker R\oplus\{\tilde{\gamma}+\hat{R}\tilde{\gamma},\tilde{\gamma}\in\Img R\}  \\
\R^\bot&=&\Ker \hat{R}\oplus\Ker R\oplus\{\gamma-R^*\gamma,\gamma\in\Img \hat{R}\} \\
&=&\Ker \hat{R}\oplus\Ker R\oplus\{\tilde{\gamma}-\hat{R}^*\tilde{\gamma},\tilde{\gamma}\in\Img R \} \\
\F&=&\Ker \hat{R}\oplus\Ker R\oplus\{\gamma+R\gamma,\gamma\in\Img \hat{R}\cap\Ker(R+R^{*})\} \\
&=&\Ker \hat{R}\oplus\Ker R\oplus\{\tilde{\gamma}+\hat{R}\tilde{\gamma},\tilde{\gamma}\in\Img R\cap\Ker(\hat{R}+\hat{R}^{*})\}
\eeqn
\paragraph{First order action:} The expression of the first order action is exactly the same as in Eq.\eqref{action3}, only the properties of the $R$ and $\hat{R}$ change. We still consider a $D$-valued field $l(\si,\tau)$ and an $\F$-valued field $\L(\si,\tau)$, and we write the first order action as:
\beq \S(l)=\frac{1}{2}\int \langle\partial_\tau ll^{-1}|\partial_\sigma ll^{-1}\rangle+\frac{1}{6}d^{-1}\langle dll^{-1}\wedge[dll^{-1}\wedge dll^{-1}]\rangle+ K(\partial_\sigma ll^{-1})+\langle \L|\partial_\sigma ll^{-1}\rangle .
\label{dress_first}
\eeq
Again, the field $\L$ is the Lagrange multiplier the variation of which imposes the constraint $\partial_\sigma ll^{-1}\in\F^\perp$, and in order to specify $K$ we need the following generalization of the Theorem 2:
\begin{theo} There exists a pair of linear operators $\alpha_\pm:\F^\bot\to\Dom R \subset\hat{\G}$ and an unique linear operator $\alpha_0:\F^\bot\to\Ker\hat{R}$ such that every $X\in\F^\bot$ can be decomposed as:
\beq
X=\alpha_+(X)+\alpha_-(X)+R\alpha_+(X)-R^*\alpha_-(X)+\alpha_0(X).
\label{decomposition3}
\eeq 
\end{theo}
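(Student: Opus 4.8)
The plan is to mimic the proof of Theorem 2 but now keeping careful track of the two extra ingredients that did not appear there: the kernel $\Ker\hat R$, which enters $\R$ and $\R^\bot$ as a common direct summand, and the kernel $\Ker(R+R^*)$, which is responsible for the non-uniqueness of $\alpha_\pm$. First I would establish that $\F^\bot=\R+\R^\bot$ exactly as before. The inclusion $\R+\R^\bot\subset\F^\bot$ follows from the isotropy computation of Theorem 2, now carried out on the generators of the new $\R$ and $\R^\bot$: one checks $\langle\Ker\hat R|\F\rangle=0$ using hypothesis~3 (since $\F$ contains $\Ker\hat R$ and the $\{\gamma+R\gamma\}$ piece with $\gamma\in\Ker(R+R^*)\subset\Dom R=(\Ker\hat R)^0$), and the cross terms between the $R$-parts vanish by the same $\langle\alpha|(R+R^*)\gamma\rangle=0$ identity. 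For the reverse inclusion I would repeat the dimension count: $\dim(\R+\R^\bot)=\dim\R+\dim\R^\bot-\dim\F$, and then show $\dim\F^\bot\le\dim(\R+\R^\bot)$ by exhibiting, via the annihilator trick on a complement $W$ of the appropriate annihilator, that no extra vectors can be orthogonal to $\F$; hypothesis~6 (non-degeneracy of $\langle.|.\rangle$ on $\Img R\oplus\Img\hat R$) is what makes the pairing between $W$ and the relevant part of $\F$ non-degenerate.

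Once $\F^\bot=\R+\R^\bot$ is known, the existence of a decomposition \eqref{decomposition3} is almost immediate: any $X\in\F^\bot$ lies in $\R+\R^\bot$, so using the duality-invariant rewriting of $\R$ and $\R^\bot$ (the ones displayed just before the theorem, with the common $\Ker\hat R\oplus\Ker R$ summand pulled out) we can write $X=\alpha_0+\xi+\eta$ where $\alpha_0\in\Ker\hat R$, $\xi=\gamma_++R\gamma_+$ with $\gamma_+\in\Dom R$, and $\eta=\gamma_--R^*\gamma_-$ with $\gamma_-\in\Dom R^*=\Dom R$. Setting $\alpha_\pm(X)=\gamma_\pm$ gives \eqref{decomposition3}. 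Linearity of $\alpha_\pm$ and $\alpha_0$ can be arranged by fixing, once and for all, a linear splitting of $\F^\bot$ into a complement of $\F$ plus $\F$ itself and reading off the components there; alternatively one simply notes the system defining them is linear in $X$.

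The real content is the uniqueness claim for $\alpha_0$ together with the precise description of the ambiguity in $\alpha_\pm$. I would argue as follows: if $X$ has two decompositions of the form \eqref{decomposition3}, subtracting them gives $0=\delta_++\delta_-+R\delta_+-R^*\delta_-+\delta_0$ with $\delta_\pm\in\Dom R$ and $\delta_0\in\Ker\hat R$. Projecting onto $\G$ and $\hat\G$ with $J,\hat J$: the $\hat\G$-component reads $\delta_++\delta_-+\delta_0=0$, and the $\G$-component reads $R\delta_+-R^*\delta_-=0$. Since $\Dom R\cap\Ker\hat R=\Ker R$ by hypothesis~3–4 (indeed $\Dom R=\Img\hat R\oplus\Ker R$ and $\Ker\hat R$ meets $\Img\hat R$ trivially), decomposing $\delta_\pm=\delta_\pm^{\Img}+\delta_\pm^{\Ker R}$ one gets from $R\delta_+=R^*\delta_-$ that $\delta_+^{\Img}$ and $\delta_-^{\Img}$ are forced by invertibility of $R,R^*$ on $\Img\hat R$ modulo $\Ker(R+R^*)$, and then $\delta_0=-(\delta_++\delta_-)$ is pinned down on the $\Ker\hat R$ part while $\delta_+-\delta_-$ ranges over $\Ker(R+R^*)$; the upshot is $\delta_0$ lies in $\Ker\hat R\cap(\text{something forced to be }0)$, hence $\delta_0=0$, giving uniqueness of $\alpha_0$, whereas $\alpha_\pm$ are determined only up to $\alpha_\pm\to\alpha_\pm\pm\Phi$ with $\Phi:\F^\bot\to\Ker(R+R^*)$, exactly parallel to the Remark after Theorem 2. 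The main obstacle I anticipate is bookkeeping: disentangling the three kernels ($\Ker R=\Ker R^*$, $\Ker\hat R=\Ker\hat R^*$, $\Ker(R+R^*)$) and repeatedly invoking the right hypothesis among 1–6 at each projection step without circularity; in particular one must be careful that $\Ker(R+R^*)\subset\Dom R=\Dom R^*$ (hypothesis~5) so that the expression $\gamma+R\gamma=\gamma-R^*\gamma$ defining the overlap of $\R$ and $\R^\bot$ makes sense, and that hypothesis~6 is genuinely used to prevent $\Img R$ and $\Img\hat R$ from interacting with the kernels in an unexpected way.
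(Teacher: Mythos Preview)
Your overall strategy---first establish $\F^\bot=\R+\R^\bot$ by a dimension count, then read off the decomposition---matches the paper's proof. The paper carries out the dimension count more concretely than you do: it introduces explicit complements $\A\subset\hat\G$, $\hat\A\subset\G$ and a subspace $\U\subset\Img\hat R$ complementary to $\Ker(R+R^*)|_{\Img\hat R}^{\,0}$, and then checks that $\A\oplus\hat\A\oplus\U\oplus\F^\bot\subset\D$ is a direct sum with $\dim(\A\oplus\hat\A\oplus\U)=\dim\F$. Your ``annihilator trick on a complement $W$'' is the right idea but would have to be spelled out at this level of detail, since $\F$ now has three pieces ($\Ker\hat R$, $\Ker R$, and the graph part over $\Ker(R+R^*)|_{\Img\hat R}$) that must be paired against separately.

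There is, however, a concrete error in your uniqueness argument for $\alpha_0$. You write that projecting $0=\delta_++\delta_-+R\delta_+-R^*\delta_-+\delta_0$ onto $\hat\G$ yields $\delta_++\delta_-+\delta_0=0$. But $\alpha_0$ takes values in $\Ker\hat R$, and $\Ker\hat R\subset\Dom\hat R\subset\G$, not $\hat\G$; so $\delta_0$ sits in the $\G$-projection, not the $\hat\G$-projection. The correct split is $\delta_++\delta_-=0$ on $\hat\G$ and $(R+R^*)\delta_++\delta_0=0$ on $\G$. Your subsequent claim ``$\Dom R\cap\Ker\hat R=\Ker R$'' is then meaningless as written, since $\Dom R\subset\hat\G$ while $\Ker\hat R\subset\G$, so their intersection is $\{0\}$; you have conflated $\Ker R\subset\hat\G$ with $\Ker\hat R\subset\G$ throughout that paragraph. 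With the projections done correctly the argument can be repaired: one needs $\Img(R+R^*)\cap\Ker\hat R=\{0\}$, which forces $\delta_0=0$ and $\delta_+\in\Ker(R+R^*)$, giving both the uniqueness of $\alpha_0$ and the expected $\alpha_\pm\to\alpha_\pm\pm\Phi$ ambiguity. The paper does not spell this step out; it simply remarks that $\alpha_0$ is the identity on $\Ker\hat R$ and zero on the rest of $\F^\bot$.
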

\begin{proof}
The proof is similar to that of the Theorem 2, but adapted to the new definition of the $\F$ subspace.\\
 We can easily verify that $\R+\R^\bot\subset \F^\bot$, indeed  because of $\langle \Ker \hat{R}|\Dom R \rangle=0$ the two subspaces $\R$ and $\R^\bot$ remain orthogonal to $\F$.
Let us calculate the dimension of  $\R+\R^\bot$:
\beq
\dim  (\R+\R^\bot)=2\dim(\Dom R)+2\dim \Ker \hat{R}-\dim \F.
\eeq
The dimension of $\Dom R$ can be found from the relation $\Dom R=(\Ker\hat{R})^0$, which gives:
\beq
\dim (\Dom R)=n-\dim(\Ker \hat{R}),
\label{ani}
\eeq 
and implies:
\beq
\dim(\R+\R^\bot)=2n-\dim \F.
\eeq
Consider now two subspaces $\A$ and $\hat{\A}$ such that $\G=\Dom \hat{R}\oplus\hat{\A}$ and $\hat{\G}=\Dom R\oplus \A$. Thus it implies that $\dim \A=\dim \Ker \hat{R}$ and $\dim \hat{\A}=\dim \Ker R$. Futhermore, we denote $R|_{\Img \hat{R}}:\Img \hat{R}\to \Img R$ the restriction of $R$ on $\Img \hat{R}$. With this convention the linear subspace $\F$ can be written:
\beq
\F=\Ker R\oplus \Ker \hat{R}\oplus\{\gamma+R|_{\Img \hat{R}}\gamma,\; \gamma\in \Ker(R+R^{*})|_{\Img \hat{R}} \}.
\eeq
We choose any complement subspace $\U$ of $\Ker(R+R^{*})|_{\Img \hat{R}}^{0}$ in $\Img \hat{R}$, where $\Ker(R+R^{*})|_{\Img \hat{R}}^{0}$ is the annihilator of $\Ker(R+R^{*})|_{\Img \hat{R}}$ in $\Img \hat{R}$. Then we can write $\Img \hat{R}=\Ker(R+R^{*})|_{\Img \hat{R}}^{0}\oplus \U$, and $\Ker(R+R^{*})|_{\Img \hat{R}}^{0}\cap\U=\{0\}$. With this definition we directly obtain the dimension of the subspace $\U$:
\beq
\dim \U=\dim \Ker(R+R^{*})|_{\Img \hat{R}}.
\eeq
We consider the following subspace $(\A\oplus\hat{\A})+\U+\F^{\bot}$, and we shall prove that all the sums are direct.\\
We want to show that $\A\cap\F^{\bot}=\{0\}$. Suppose that there exists a non vanishing $a\in\A$ such that $ \langle a|\hat{k}\rangle=0$ for all $\hat{k}\in \Ker \hat{R}\subset \F$, then $a\in(\Ker \hat{R})^{0}=\Dom R$. Since $\A\cap\Dom R=\{0\}$, it implies that $a=0$ in contradiction with the hypothesis and we obtain that $\A\cap\F^{\bot}=\{0\}$.
The same reasoning holds for the subspace $\hat{A}$, leading to $\hat{A}\cap\F^{\bot}=\{0\}$.\\
Let us show now that $\U\cap \F^{\bot}=\{0\}$. Suppose that there exists a non vanishing $u\in\U$ such that $\langle u|\gamma\rangle=0$ for all $\gamma \in \Ker(R+R^{*})|_{\Img \hat{R}}$, then $u\in\Ker(R+R^{*})|_{\Img \hat{R}}^{0}$. But $\U\cap\Ker(R+R^{*})|_{\Img \hat{R}}^{0}=\{0\}$, it implies that $u=0$ contradicting the hypothesis. We obtain $\U\cap \F^{\bot}=\{0\}$.\\
Moreover $\U\cap(A\oplus\hat{A})=\{0\}$, because $\U\subset \Img\hat{R}$ and $\Img \hat{R}\cap A=\Img \hat{R}\cap\hat{A}=\{0\}$.\\
We can conclude that the following sum $A\oplus\hat{A}\oplus\U\oplus\F^{\bot}\subset \D$ is direct. Now observe that $\dim (A\oplus\hat{A}\oplus\U)=\dim \F$, since we have already shown that $\dim \U=\dim \Ker(R+R^{*})|_{\Img \hat{R}}$, $\dim \A=\dim \Ker \hat{R}$ and $\dim\hat{\A}=\dim \Ker R$. From the fact that $A\oplus\hat{A}\oplus\U\oplus\F^{\bot}\subset \D$, we obtain  the relation for the dimensions: $\dim \F^{\bot}+\dim\F\le 2n$.
\medskip \\
From relation \eqref{ani} we obtained $2n=\dim \F +\dim (\R+\R^{\bot})$, moreover we know that $\R+\R^{\bot}\subset\F^{\bot}$. Thus from the inequality $\dim \F^{\bot}+\dim\F\le 2n$ we observe that $\dim \F^\bot=2n-\dim\F=\dim  (\R+\R^\bot)$, which implies that $\R+\R^\bot=\F^\bot$.\\
Thus, for any $X\in \F^\bot=\R+\R^\bot$ we can decompose it as: $$X=\alpha_{+}(X)+R\alpha_{+}(X)+\alpha_{-}(X)-R^*\alpha_{-}(X)+\alpha_{0}(X), $$
with $\;(\alpha_{+},\alpha_{-})\in \Dom R,\; \alpha_{0} \in \Ker\hat{R}.$\\
We remark that the operator $\alpha_0$ is nothing but the identity on $\Ker\hat{R}$ and the null operator on the rest of $\F^\bot$.
\end{proof}
\noindent The quadratic form $K$ has the same expression as Eq.\eqref{quadratic_form}:
\beq
K(X)=2\langle \alpha_+(X),R\alpha_+(X)\rangle+2\langle \alpha_-(X),R\alpha_-(X)\rangle,\; \forall X\in \F^\bot,
\eeq
and $K$ is well defined on $\F^\bot$ since we required $\Dom R=\Dom R^*$.
\paragraph{The second order action:} Since the double $D$ is perfect, we parametrize any field $l(\sigma,\tau)\in D$ as $l(\sigma,\tau)=g(\sigma,\tau)\hat{h}(\sigma,\tau)$ with $g(\sigma,\tau)\in G$ and $\hat{h}(\sigma,\tau)\in\hat{G}$. We obtain the action:
\beq
\mathcal{S}(g,\hat{h},\L)=\int\langle \partial_\tau gg^{-1}|\partial_\sigma l l^{-1}\rangle+\langle\partial_\sigma ll^{-1}|\Lambda \rangle +\frac{1}{2}K(\partial_\sigma l l^{-1}),
\eeq
and:
\beq
\mathcal{S}(g,\hat{h},\L)=\int\langle \partial_\tau gg^{-1}|\alpha_++\alpha_-\rangle+\langle \alpha_+|R\alpha_+\rangle+\langle \alpha_-|R\alpha_-\rangle,
\label{action6}
\eeq
where we used in \eqref{action6} the decomposition $\partial_\sigma l l^{-1}=\alpha_++\alpha_-+R\alpha_+-R^*\alpha_-+\alpha_0$ on $\F^\bot$.\\
Furthermore, from the decomposition \eqref{decomposition3} considered for $X=\partial_{\sigma}ll^{-1}=\partial_{\sigma}gg^{-1}+\Ad_{g}\partial_{\sigma}\hat{h}\hat{h}^{-1}$, we obtain:
\beq
\partial_{\sigma} g g^{-1}=(R+\Pi^{g})\alpha_{+}-(R+\Pi^{g})\alpha_{-}+\alpha_{0}.
\label{inter}
\eeq
We can apply to the previous relation a projector $\rho$ from $\G$ to $\Img R\oplus \hat{\A}$ and such that $\Ker\rho=\Ker\hat{R}$, then the relation \eqref{inter} becomes:
\beq
\rho(\partial g g^{-1})=( R+\rho\circ\Pi^{g})\alpha_{+}- (R+ \rho\circ\Pi^{g})\alpha_{-},
\eeq
with the operator $ (R+ \rho\circ\Pi^{g}):\Dom R \to \Img R\oplus\hat{\A}$. Moreover, since $\langle \Dom R|\Ker \hat{R} \rangle=0$, the action \eqref{action6} can be rewritten:
\beq
\mathcal{S}(g,\hat{h},\L)=\int\langle \rho(\partial_\tau gg^{-1})|\alpha_++\alpha_-\rangle+\langle \alpha_+|R\alpha_+\rangle+\langle \alpha_-|R\alpha_-\rangle.
\label{action7}
\eeq
And, with the following relations,
\beqn
\tilde{Y}&=&2\alpha_{+}-(R+ \rho\circ\Pi^{g})^{-1}\rho(\partial_{+} gg^{-1})\nonumber \\
\tilde{Z}&=&( R+ \rho\circ\Pi^g)\alpha_+- (R+ \rho\circ\Pi^g)^*\alpha_-=\rho(\partial_\sigma gg^{-1}), \nonumber
\eeqn
the action \eqref{action7} can be written:
\beqn
S(g,\hat{h})&=&-\frac{1}{2}\int \langle \tilde{Y}|(R+ \rho\circ\Pi^g)\tilde{Y}\rangle\nonumber\\& +& \langle \tilde{Z}-\rho(\d_\tau gg^{-1})| (R+ \rho\circ\Pi^g)^{-1}(\tilde{Z}+\rho(\d_\tau gg^{-1})\rangle.
\label{lag6bis}
\eeqn
\\
Contrary to the previous case ($\Ker \hat{R}=\{0\}$), this time the variable $\tilde{Y}$ is defined on the $\Dom R\subset \hat{\G}$ and the variable $\tilde{Z}$ is defined on the subspace $\Img R\oplus \hat{\A}$ rather than on the whole $\G$ space.\\
The action \eqref{lag6bis} has exactly the same structure as the action \eqref{lag4bis} and leads, after the minimization, to the following second order action:
\beq
\mathcal{S}(g)=\frac{1}{2}\int d\xi^+d\xi^-\langle \rho(\partial_- gg^{-1})|  \big( R+ \rho\circ\Pi^g\big)^{-1}\rho(\partial_+ gg^{-1})\rangle,\quad g\in G .
\label{action9}
\eeq

\medskip

\noindent We won't do the calculus for the dual model, which is precisely the same as for the model on $G$, and leads to the following dual action:
\beq
\mathcal{S}(\hat{g})=\frac{1}{2}\int  d\xi^+d\xi^-\langle \tilde{\rho}(\partial_- \hat{g}\hat{g}^{-1})|     \big(\hat{R}+\tilde{\rho}\circ\hat{\Pi}^{\hat{g}}\big)^{-1}\tilde{\rho}(\partial_+ \hat{g}\hat{g}^{-1})\rangle,\quad \hat{g}\in \hat{G} ,
\label{action10}
\eeq
where $\tilde{\rho}$ is the projector from $\hat{\G}$ to the subspace $\Img \hat{R}\oplus \A$ with $\Ker\hat{\rho}=\Ker R$.
\medskip \\
 {\bf Remarks:} The duality relation $\hat{R}=R^{-1}$ of Section 2 and 3 has its analogue in the present section, indeed the operators $R$ and $\hat{R}$ are invertible on the respective subspaces $\Img \hat{R}$ and $\Img R$ and it holds $(R|_{\Img \hat{R}})^{-1}=\hat{R}|_{\Img R}$. 

\paragraph{The generalized dressing cosets}: We suppose again that the linear subspace $\F$ is a Lie subalgebra of $\D$, whose Lie group $F$ is a connected Lie subgroup of $D$. Then:
\begin{theo}
The first order action \eqref{dress_first} develops a gauge symmetriy with $F$ as the gauge group if both operators $R$ and $\hat{R}$ verify on their domains:
\beqn
J\Ad_f\hat{J}+J[\Ad_f,R]\hat{J}&=&R\Ad_f R, \label{condition_bis}\\
\hat{J}\Ad_fJ+\hat{J}[\Ad_f,\hat{R}]J&=&\hat{R}\Ad_f \hat{R},\; \forall f\in F.
\label{condition_ter}
\eeqn
\end{theo}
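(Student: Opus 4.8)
The plan is to follow closely the structure of the proof of Theorem 3, isolating the two places where the non-trivial kernels $\Ker R=\Ker R^*$ and $\Ker\hat R=\Ker\hat R^*$ enter. The gauge transformation to use is the obvious analogue of the one in Theorem 3, namely $l\to fl$ and $\L\to f\L f^{-1}-\d_\ta ff^{-1}$ for $f(\si,\ta)\in F$, under which $\d_\si ll^{-1}\to\d_\si ff^{-1}+\Ad_f\d_\si ll^{-1}$. First I would dispose of the kinetic-plus-Wess--Zumino part $\S_0$ together with the Lagrange-multiplier term, leaving only the $K$-term: since $\F$ is a subalgebra it is isotropic (being $\R\cap\R^\bot$) and $\d_\ta ff^{-1},\d_\si ff^{-1}\in\F$, so $\S_0(f)=0$, $\la\Ad_f\L|\d_\si ff^{-1}\ra=0$ and $\la\d_\ta ff^{-1}|\d_\si ff^{-1}\ra=0$; granting the stability statement $\Ad_f\F^\bot\subset\F^\bot$ one also gets $\la\d_\ta ff^{-1}|\Ad_f\d_\si ll^{-1}\ra=0$ from $\F\perp\F^\bot$, and then the Polyakov--Wiegmann identity, used exactly as in Theorem 3, shows that $\S_0(l)+\int\la\L|\d_\si ll^{-1}\ra$ is gauge invariant and that the constraint $\d_\si ll^{-1}\in\F^\bot$ is preserved. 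Everything is thus reduced to (i) $\Ad_f\F^\bot\subset\F^\bot$ and (ii) $K(\Ad_f X)=K(X)$ for $X\in\F^\bot$, the extra shift by $\d_\si ff^{-1}\in\F\subset\F^\bot$ being immaterial since $K$ was shown to be well defined on $\F^\bot$, i.e. insensitive to addition of $\F$-elements.

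For (i) I would prove $\Ad_f\R\subset\R$ and $\Ad_f\R^\bot\subset\R^\bot$ separately; then $\Ad_f\F^\bot=\Ad_f(\R+\R^\bot)\subset\R+\R^\bot=\F^\bot$ by the identity $\F^\bot=\R+\R^\bot$ established in Theorem 4, and likewise $\Ad_f\F\subset\F$. Using the duality-invariant descriptions of $\R$ and $\R^\bot$ recorded in Section 4, the summands $\Ker R$ and $\Ker\hat R$ are handled for free: both lie in the subalgebra $\F$, and $F$ being connected means $\Ad_f$ preserves them inside $\F$. What remains is the behaviour of $\Ad_f$ on $\{\gamma+R\gamma:\gamma\in\Img\hat R\}$ and on $\{\gamma-R^*\gamma:\gamma\in\Img\hat R\}$, and this is precisely what the conditions \eqref{condition_bis} and \eqref{condition_ter} control. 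Concretely, rewriting \eqref{condition_bis} with the projectors $J,\hat J$ (using $J^*=\hat J$, $\hat J^*=J$, $\Ad_f^*=\Ad_{f^{-1}}$) should turn it into the statement that the $\G$-component of $\Ad_f(\gamma+R\gamma)$ equals $R$ applied to its $\hat\G$-component — equivalently that $\Ad_f$ returns the $R$-graph piece of $\R$ into $\R$ (modulo $\Ker\hat R$); the companion statements for $\R^\bot$ and for the dual ($\hat R$-) description of these subspaces are what \eqref{condition_ter} supplies. In the standard dressing-coset case of Section 3 a single condition sufficed because $R$ was invertible and carried all the information, whereas here the two graph descriptions of $\R,\R^\bot$ are genuinely independent because of the kernels, so both \eqref{condition_bis} and \eqref{condition_ter} are needed. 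I expect this bookkeeping — tracking into which $\Ker$- or $\Img$-summand each projected component falls, and verifying that the two conditions are exactly the stability requirements (the analogue of the Remark following Theorem 3) — to be the main obstacle.

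Granting (i), step (ii) is essentially the computation in the proof of Theorem 3. For $X\in\F^\bot$ one decomposes $\Ad_f X$ in two ways via \eqref{decomposition3}, once as $\Ad_f$ of the decomposition of $X$ and once directly, and equates them; projecting onto $\R$ expresses $\al_+(\Ad_f X)$ and $R\al_+(\Ad_f X)$ in terms of $\Ad_f\al_+(X)$ and $\Ad_f R\al_+(X)$, and similarly onto $\R^\bot$ for $\al_-$, while the $\al_0$-direction decouples since $\la\Ker\hat R|\Dom R\ra=0$. Substituting into $K(\Ad_f X)=2\la\al_+(\Ad_f X)|R\al_+(\Ad_f X)\ra+2\la\al_-(\Ad_f X)|R\al_-(\Ad_f X)\ra$ and invoking invariance of $\la\cdot|\cdot\ra$ produces $K(X)$ plus exactly the cross terms shown to vanish in Theorem 3, namely quadratic forms built from the skew-adjoint operators $\hat J\Ad_{f^{-1}}\hat J\Ad_f J$ and $J\Ad_{f^{-1}}\hat J\Ad_f\hat J$. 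The non-trivial kernels cause no trouble here: since $\Ker R=\Ker R^*$, the $\Ker R$-part of each $\al_\pm$ drops out of $\la\al_\pm|R\al_\pm\ra$ (because $\la\kappa|R\beta\ra=\la R^*\kappa|\beta\ra=0$ for $\kappa\in\Ker R$), so $K$ only feels the $\Img\hat R$-components, on which $R=R|_{\Img\hat R}$ is invertible, and the argument becomes word for word that of Theorem 3. Finally, running the whole discussion with $R$ and $\hat R$ interchanged — using \eqref{condition_ter}, the dual ansatz $l=\hat g h$ and the projector $\ti\rho$ — is what makes the two conditions appear symmetrically and, at the same time, yields the gauge symmetry in the dual frame, which completes the proof.
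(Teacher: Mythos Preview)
Your proposal is correct and follows the same route as the paper: both reduce Theorem~5 to the argument already given for Theorem~3, using that the action \eqref{dress_first} has the same structure as \eqref{action3} and that condition \eqref{condition_bis} plays the role of \eqref{condition} on $\Dom R$. The paper's own proof is in fact a one-line invocation of Theorem~3; your write-up supplies considerably more of the bookkeeping (the $\Ker R$, $\Ker\hat R$ summands landing in $\F\subset\R\cap\R^\bot$, the decoupling of $\alpha_0$, the $\Img\hat R$-reduction of $K$), and your observation that both \eqref{condition_bis} and \eqref{condition_ter} are genuinely needed because the two graph descriptions of $\R,\R^\bot$ are independent when the kernels are non-trivial matches the content of the Remark preceding the paper's proof.
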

\noindent {\bf Remark:} Before the proof, let us make a short remark; because of the relation $$(R|_{\Img \hat{R}})^{-1}=\hat{R}|_{\Img R},$$ the relations \eqref{condition_bis},\eqref{condition_ter} are equivalent on the subspaces $\Img \hat{R}$ and $\Img R$ respectively. However, since $\Dom R=\Img \hat{R}\oplus \Ker R$ and $\Dom \hat{R}=\Img R\oplus \Ker \hat{R}$, the relations \eqref{condition_bis},\eqref{condition_ter} give new conditions for $R$ and $\hat{R}$ on the kernels.  If we choose $\delta\in\Ker \hat{R}$ and $\tilde{\delta}\in \Ker R$, then the relations \eqref{condition_bis},\eqref{condition_ter} give:
\beqn
\hat{J}\Ad_f J \delta&=&R\hat{J}\Ad_f J \tilde{\delta}, \\
J \Ad_f \hat{J} \tilde{\delta}&=&\hat{R}J\Ad_f \hat{J} \delta,
\eeqn 
implying that $\Ad_f \Ker \hat{R} \subset \Ker \hat{R}$ and $\Ad_f \Ker R \subset \Ker R$.
\begin{proof}
The proof is exactly the same as for the Theorem 3, since on one hand the action \eqref{dress_first} has the same structure as the first order action \eqref{action3} for the standard dressing cosets, and on the other hand the relation \eqref{condition} responsible of the gauge invariance of the action \eqref{action3} still exists and is generalized throught the relation \eqref{condition_bis} on the domain of $R$.
\end{proof}

\section{The gauge invariance of second order actions}
In this section we shall prove the gauge invariance of the second order actions of the types \eqref{lag5} and \eqref{action9} for all cases of dressing cosets considered in Section 2, 3 and 4. In other words, we shall etablish that the dual models live respectively on the target spaces $F\backslash G$ and $F\backslash \hat{G}$.\medskip
\\
We give now a proof of the gauge invariance of the second order actions of the type \eqref{lag5}, \eqref{action9} with respect to the dressing action of $F$ on $G$. Let us examine this dressing action by taking an element $f\in F$ and let it act on any element $l\in D$ by the standard left multiplication. Since $l=g\hat{h}$ with $g\in G$ and $\hat{h}\in \hat{G}$, the left action of $F$ on $D$ can be decomposed as \beq 
f l=(g\Delta g) (\Delta \hat{h}\hat{h}).
\label{transf}
\eeq
Then the dressing action on $G$ is given by $f\rhd g=g\Delta g$ and on $\hat{G}$ by $f\rhd \hat{h}=\Delta \hat{h}\hat{h}$. From Eq.\eqref{transf} we get for the variation $\Delta g$ and $\Delta \hat{h}$:
\beq
\Ad_{g^{-1}}f=\Delta g \Delta \hat{h}.
\eeq
In order to prove the gauge invariance we write the usual expression of the Poisson-Lie models in term of components. Recall that $\mathcal{R}$ is a $n$-dimensional linear subspace of $\mathcal{D}$ and can be written as a linear combinaison of the $2n$  generators $\{t_i,T^j\}$ of $\mathcal{D}$ such that $\langle t_i|T^j\rangle=\delta_i^j$, $\langle T^i|T^j \rangle=\langle t_i|t_j \rangle=0$ with $t_i\in \G$ and $T^i\in\hat{\G}$. Let us define the adjoint action of an element $g\in G$ on the generators $\{t,T\}$ of the Lie algebra $\mathcal{D}$:
\beq
\Ad_{g^{-1}}t_i=a(g)_i^jt_j,\quad\Ad_{g^{-1}}T^i=b(g)^{ij}t_j+a^{-1}(g)^i_jT^j,
\eeq
We can express the components of  $\Pi^g$ in term of the $a(g)$ and $b(g)$ matrices as $\Pi(g)^{ij}=b(g)^{ir}a^{-1}(g)^j_r$.
Then the action \eqref{lag5} becomes:
\beq 
\mathcal{S}(g)=\frac{1}{2}\int  d\xi^+d\xi^-(\partial_+ gg^{-1})^i(R+\Pi(g))^{-1}_{ij}(\partial_- gg^{-1})^j.
\label{DC}
\eeq
Let us consider the action \eqref{DC} evaluated in $g\Delta g$ for a given $f(\tau,\sigma)$ :
\beq
\S(g\Delta g)=\S(g)+\Delta \S(g)=\int d\xi^+d\xi^-( \mathcal{L}(g)+\Delta \mathcal{L}(g)),
\eeq
with $\mathcal{L}$ the Lagrangian of the action $S(g)$. To express the first order variation $\Delta \mathcal{L}(g)$ we need the infinitesimal variation of $\partial_\pm g g^{-1} \in\mathcal{G}$ and of the  Poisson-Lie bivector $\Pi(g)^{ij}$. Any element $\gamma\in\Ker( R+R^*)$ can be written as $\gamma=\gamma_\alpha T^\alpha$ where the  greek indices refer to $\Ker(R+R^*)=Span(T^{\alpha})$ with $\alpha=1,...,\dim \F$, and the subalgebra can be written $\F=Span(T^\alpha+R^{\alpha i}t_i)$. Thus an element $f$ of $F$ takes the form $f=e^{\epsilon_\alpha(T^\alpha+R^{\alpha i}t_i)}\in F$, where the  parameters $\epsilon_\alpha$ are the coordinates of $f$ on the group $F$.\\
The infinitesimal variations of the vectors $\partial_\pm gg^{-1}$ and of the bivector $\Pi(g)^{ij}$ are given by:
\beqn
&\Delta \partial_\pm gg^{-1}=\epsilon_\alpha \big[ R^{\alpha i}f_{li}^k-[\hat{f}^{\alpha k}_l-f^\alpha_{ls}\Pi(g)^{sk}]\big](\partial_\pm g g^{-1})^l t_k \label{rel1} \\
&\Delta \Pi(g)^{ij}=\epsilon_\alpha\big[R^{\alpha k}+\Pi(g)^{\alpha k}\big]\nabla_k\Pi(g)^{ij}, \label{rel2}
\eeqn
with $\nabla_k \Pi(g)^{ij}=-\tilde{f}^{ij}_k-f^i_{kr}\Pi(g)^{jr}+f^j_{kr}\Pi(g)^{ir}$, $f$ and $\hat{f}$ the structure constants of the Lie algebras $\G$ and $\hat{\G}$ respectively.\\
Denoting $E(g)=R+\Pi(g)$, the infinitesimal variation of the Lagrangian of the action \eqref{DC} is:
\beqn
\Delta \mathcal{L}(g)&=&(\Delta \partial_+ g g^{-1})^i(E(g)^{-1})_{ij}(\partial_- g g^{-1})^j+(\partial_+ g g^{-1})^i(E(g)^{-1})_{ij}(\Delta\partial_- g g^{-1})^j\nonumber \\ &-&(\partial_+ g g^{-1})^i(E(g)^{-1}\Delta E(g)E(g)^{-1})_{ij}(\partial_- g g^{-1})^j.
\label{varlag}
\eeqn
Replacing by equalities \eqref{rel1}-\eqref{rel2} and using the following Jacobi identity for $\Pi(g)$:
\beq
f^i_{sr}\Pi(g)^{js}\Pi(g)^{kr}+\tilde{f}^{ij}_s\Pi(g)^{ks}+cp(i,j,k)=0.
\eeq
Eq.\eqref{varlag} becomes:
\beq
\Delta \mathcal{L}(g)=\epsilon_\alpha(\partial_+ g g^{-1})^i(E(g)^{-1})_{ik}\Omega^{\alpha,kl}(E(g)^{-1})_{lj}(\partial_- g g^{-1})^j,
\eeq
where
$$\Omega^{\alpha,kl}=-R^{\alpha s}f^l_{st}R^{kt}+R^{ks}f^\alpha_{st}R^{tl}-R^{\alpha s}f^k_{st}R^{tl}+\hat{f}^{\alpha l}_sR^{ks}+\hat{f}^{kl}_sR^{\alpha s}+\hat{f}^{\alpha k}_sR^{sl}.$$\\
The vanishing of $\Omega^{\alpha,kl}$ is the consequence of the relation \eqref{condition}:
\beq
J\Ad_f\hat{J}+J[\Ad_f,R]\hat{J}=R\Ad_f R,\; \forall f\in F, \nonumber
\eeq
 for an infinitesimal variation $\delta f$.
We here thus proved the invariance of the action \eqref{lag5} under the dressing action of $F$ on $G$, i.e. $\mathcal{S}(g\Delta g)=\mathcal{S}(g)$, $g\in G$. Consequently, the target spaces of the dual pair are in fact the cosets $F\backslash G$ and $F\backslash \hat{G}$.\\
{\bf Remark 1:}
Eq.\eqref{condition} is not the only set of constraints on $R$, the following one comes from the fact that $\mathcal{F}$ is a Lie subalgebra therefore verifies $[\mathcal{F},\mathcal{F}]\subset \mathcal{F}$ or equivalently:
\beq
\iff \left \{ \begin{array}{l} \hat{f}_M^{\alpha\beta}+R^{\beta s}f^\alpha_{sM}-R^{\alpha s}f^\beta_{sM}=0 \\ \big[ R^{\alpha S}f^l_{st}R^{\beta t}-R^{\beta s}\hat{f}^{\alpha l}_s+R^{\alpha s}\hat{f}^{\beta l}_s\big](R^{-1})_{lM}=0
\end{array} \right. .
\label{null2}
\eeq
{\bf Remark 2:} In the case of the generalized dressing cosets the proof of the gauge invariance is similar. Indeed, if we denote $\hat{\A}\oplus\Img R=Span(t_{M})$ with $M=\dim \Ker \hat{R}+1,...,\dim \G$. Then the action \eqref{action9} becomes simply:
\beq \mathcal{S}(g)=\frac{1}{2}\int d\xi^+d\xi^-(\partial_+ gg^{-1})^M(R+\Pi_R(g))^{-1}_{MN}(\partial_- gg^{-1})^N.
\eeq
Since the relations \eqref{condition_bis} and \eqref{condition_ter} still hold and the generalized action has the same structure as the standard dressing cosets action, the generalized dressing cosets action is gauge invariant.

\section{Application of generalized dressing cosets: Models of Sfetsos \cite{Sfet1}}
Let us show that the models studied by Sfetsos \cite{Sfet1} enter in the category of the generalized dressing cosets. Recall from the beginning of Section 4 that the dual pair of the models is encoded in the choice of operators $R:\Dom R \to \Img R$ and $\hat{R}:\Dom\hat{R}\to\Img\hat{R}$. Here $\Dom \hat{R}$ and $\Img R$ are subsets of $\G$,  $\Dom R$ and $\Img \hat{R}$ are subsets of $\hat{\G}$, and both operators verify the relations \eqref{condition_bis} and \eqref{condition_ter}. We consider now a particular case which will turn out to give the Sfetsos models.
\begin{enumerate}
\item $\Ker R=\Ker (R+R^{*})=\{0\}$
\item $\Ker\hat{R}=\Ker(\hat{R}+\hat{R}^{*})\ne\{0\}$ and $\Ker\hat{R}$ is a Lie subalgebra of $\G$.
\end{enumerate}
Because the bilinear form is non degenerate on $\Img R\oplus\Img\hat{R}$, we can choose a basis $\{t_{i},T^{j}\}$ of $\D$ such that $\Img R=Span(t_{M}), \; \Img \hat{R}=Span(T^{M}),\; \Ker \hat{R}=Span(t_{\alpha})$, and \beq \langle t_{i}|T^{j}\rangle=\delta_{i}^{j},\quad (i,j)=1,...,n. \label{xy}\eeq
Here $\alpha=1,...,\dim \F$ and $M=\dim \F+1,..., \dim\G$.\\ Note in particular, that the condition \eqref{xy} implies also $\langle T^{M}|t_{N}\rangle=\delta^{M}_{N}$, $(M,N)=\dim \F+1,..., \dim\G$. We define the subspace $\A$ of $\hat{\G}$ as $\A=Span(T^{\alpha})$. We note that the points 1-2 imply that $\G=\Dom \hat{R}=\Img R\oplus \Ker \hat{R}$, $\hat{\G}=\Dom R=\Img \hat{R}\oplus \A$ and $\F=\Ker \hat{R}$.\\
 The action \eqref{action9} becomes that of Sfetsos \cite{Sfet1}:
\beq \mathcal{S}(g)=\frac{1}{2}\int d\xi^+d\xi^-(\partial_+ gg^{-1})^M(R+\Pi_R(g))^{-1}_{MN}(\partial_- gg^{-1})^N,
\label{action12}
\eeq
where $(O)_{{MN}}^{-1}$ means the inverse matrix of $\langle T^{M}|OT^{N} \rangle$. The dual theory is obtained by working out the action \eqref{action10} for our particular choice of $R$ and $\hat{R}$ which gives:
\beq
\mathcal{S}(\hat{g})=\frac{1}{2}\int  d\xi^+d\xi^-(\partial_+ \hat{g}\hat{g}^{-1})_i\left(\begin{array}{cc}(\hat{R}+ \hat{\Pi}_R(\hat{g}))_{\alpha \beta}&(\hat{R}+ \hat{\Pi}_R(\hat{g}))_{\alpha N}\\(\hat{R}+ \hat{\Pi}_R(\hat{g}))_{M \beta}& (\hat{R}+\hat{\Pi}_R(\hat{g}))_{MN}\end{array}\right)^{-1}(\partial_- \hat{g}\hat{g}^{-1})_j.
\eeq
Note that the components $\hat{R}_{\alpha \beta}$, $\hat{R}_{\alpha N}$ and $\hat{R}_{M \beta}$ vanish since $\Ker \hat{R}=Span(t_{\alpha})$, leading to:
\beq
\mathcal{S}(\hat{g})=\frac{1}{2}\int  d\xi^+d\xi^-(\partial_+ \hat{g}\hat{g}^{-1})_i\left(\begin{array}{cc}\hat{\Pi}_R(\hat{g})_{\alpha \beta}& \hat{\Pi}_R(\hat{g})_{\alpha N}\\ \hat{\Pi}_R(\hat{g})_{M \beta}& (\hat{R}+\hat{\Pi}_R(\hat{g}))_{MN}\end{array}\right)^{-1}(\partial_- \hat{g}\hat{g}^{-1})_j.
\label{action14}
\eeq
Furthermore, from $\hat{R}|_{\Img R}=(R|_{\Img \hat{R}})^{-1}$ we obtain $\hat{R}_{MN}=(R^{-1})_{MN}$, hence the action \eqref{action14} becomes that of the dual Sfetsos theory:
\beq
\mathcal{S}(\hat{g})=\frac{1}{2}\int  d\xi^+d\xi^-(\partial_+ \hat{g}\hat{g}^{-1})_i\left(\begin{array}{cc}\hat{\Pi}_R(\hat{g})_{\alpha \beta}& \hat{\Pi}_R(\hat{g})_{\alpha N}\\ \hat{\Pi}_R(\hat{g})_{M \beta}& \big(R^{-1}+\hat{\Pi}_R(\hat{g})\big)_{MN}\end{array}\right)^{-1}(\partial_- \hat{g}\hat{g}^{-1})_j.
\label{action15}
\eeq
Following the results of Section 5, Sfetsos'actions \eqref{action12} and \eqref{action15} are gauge invariant, if the conditions \eqref{condition_bis} and \eqref{condition_ter} are imposed. Note that the gauge invariance of the action \eqref{action12} was already proved by Sfetsos, however as far as the action \eqref{action15} is concerned, Sfetsos has only conjectured its gauge invariance. In our paper we have proved his conjecture with the group $F$ acting on $\hat{G}$ in the dressing way described in Eq.\eqref{transf}.
\section{Conclusions and Outlook}
We gave a new algebraic definition of the dual pair of $\si$-models introduced in \cite{KS1} under the name of the dressing cosets. Our new construction has led to a more general class of models than those constructed in \cite{KS1}. We call them the generalized dressing cosets. To specify more closely the character of the generalization let us make the book-keeping of all the models considered in this paper following the properties of the fundamental operators $R$ and $\hat{R}$.
\subparagraph{Standard Poisson-Lie T-duality \cite{KS2}:} $\Ker R= \Ker \hat{R}=\{0\}$ and $\Ker(R+R^*)=\{0\}$. There is no gauge group in this case.
\subparagraph{Standard dressing cosets \cite{KS1}:} $\Ker R= \Ker \hat{R}=\{0\}$ and $\Ker(R+R^*)\neq\{0\}$. The gauge group is a subgroup of $D$ but neither of $G$ nor of $\hat{G}$.
\subparagraph{Sfetsos models \cite{Sfet1}:} Either $\Ker R=\Ker(R+R^*)\neq\{0\}$ or $\Ker \hat{R}=\Ker(\hat{R}+\hat{R}^*)\neq\{0\}$. The gauge group is a subgroup of $G$ or of $\hat{G}$ respectively.
\subparagraph{New Sfetsos-like cases:} $\Ker R=\Ker(R+R^*)\neq\{0\}$ and $\Ker \hat{R}=\Ker(\hat{R}+\hat{R}^*)\neq\{0\}$. The gauge group is the direct product of the  Lie groups corresponding to the Lie algebras $\Ker R$ and $\Ker \hat{R}$.
\subparagraph{Generalized dressing cosets:}
$\Ker R\neq\{0\}$ and/or $\Ker \hat{R}\neq\{0\}$ and $\Ker(R+R^*)\neq\{0\}$. The gauge group is any kind of subgroup of $D$.
\medskip\\
Furthermore we have directly proved the gauge invariance of all second order actions. In the future we plan to study the dressing cosets in the case where the Drinfeld double $D$ is not perfect. 

\end{document}